\DeclarePairedDelimiter\floor{\lfloor}{\rfloor}
\useunder{\uline}{\ul}{}
\newcommand{\mynote}[3]{
    \fbox{\bfseries\sffamily\scriptsize#1}
    {\small$\blacktriangleright$\textsf{\emph{\color{#3}{#2}}}$\blacktriangleleft$}}}
\newcommand{\mynote}[3]{}}
\newcommand{\squeezeup}{\vspace{0mm}}
\newcommand*\numcircledmod[1]{\raisebox{.5pt}{\textcircled{\raisebox{-.9pt} {#1}}}}
\newcommand{\name}{NimbleChain\xspace}
\newcommand{\kblocks}{C blocks\xspace}
\newcommand{\bbcp}{BBP\xspace}
\newcommand{\chainselectionrule}{chain selection rule\xspace}
\newcommand{\CSR}{CSR}
\newcommand{\reddelivered}{committed\xspace}
\newtheorem{lemma}{Lemma}
\theoremstyle{remark}
\newtheoremstyle{theoremd}
  {}
  {}
  {\em}
  {}
  {\bf}
  {.}
  { }
  {\thmnote{ #3} \thmname{#1}}
\theoremstyle{theoremd}
\newtheorem{property}{property}
\title{\name: Speeding up cryptocurrencies in general-purpose permissionless blockchains} 
\author{Paulo Silva, Miguel Matos, Jo\~ao Barreto\\
INESC-ID, Instituto Superior T\'ecnico, Universidade de Lisboa\\
\{paulo.mendes.da.silva, miguel.marques.matos, joao.barreto\}@tecnico.ulisboa.pt}
\begin{document}

\maketitle

\begin{abstract}

  Nakamoto's seminal work gave rise to permissionless blockchains -- as well as a wide range of proposals to mitigate their performance shortcomings.
  Despite substantial throughput and energy efficiency achievements, most proposals only bring modest (or marginal) gains in transaction commit latency.
Consequently, commit latencies in today's permissionless blockchain landscape remain prohibitively high.

%
This paper proposes \name, a novel algorithm that extends permissionless blockchains based on Nakamoto consensus with a fast path that delivers 
\emph{causal promises of commitment}, or simply \emph{promises}.
Since promises only partially order transactions, their latency is only a small fraction of the 
totally-ordered commitment latency of Nakamoto consensus.
Still, the weak consistency guarantees of promises are \emph{strong enough} to correctly implement cryptocurrencies.
To the best of our knowledge, \name is the first system to bring together fast, partially-ordered transactions with consensus-based, totally-ordered 
 transactions in a permissionless setting.
This hybrid consistency model is able to speed up cryptocurrency transactions while still supporting smart contracts, which typically have (strong) sequential consistency needs.

We implement \name as an extension of Ethereum and evaluate it in a 500-node geo-distributed deployment.
The results show \name can promise a cryptocurrency transactions up to an order of magnitude faster than a vanilla Ethereum implementation, with marginal overheads.

\vspace{1cm}

\noindent\textbf{
If you’d like to cite this work, please use the reference below:\\\\
\emph{Paulo Silva, Miguel Matos, and João Barreto. \\
NimbleChain: Speeding up cryptocurrencies in general-purpose permissionless blockchains. \\
ACM Distributed Ledger Technologies. 2022.\\ 
\url{https://doi.org/10.1145/3573895}}
}

\end{abstract}

\IEEEpeerreviewmaketitle


\section{Introduction}
\label{sec:intro}


\begin{table*}[t]
  \center
\footnotesize
\begin{tabular}{|l|l|l|l|l|}
\hline
System                      & Event  & Double-spending-resistant & Consistency  & Latency  \\ \hline
Ethereum                    & Commit        & Yes                  & Total order  & $\sim$3min         \\ \hline
\multirow{2}{*}{\name} & Promise & Yes                  & Causal       & Typically $\sim$4s to $\sim$25s     \\ \cline{2-5} 
                            & Commit  & Yes                  & Total+causal & $\sim$3m         \\ \hline
\end{tabular}
	\caption{Comparison of \name's promise/commit events with Ethereum (an instance of BBP).}
\label{table:commits}
\end{table*}

The majority of permissionless blockchains, including mainstream ones
such as Bitcoin~\cite{Nakamoto2008} or Ethereum~\cite{Wood2014}, rely on the foundations
of Nakamoto's seminal
consensus protocol~\cite{Nakamoto2008}. 
In these systems, the probability that a given block 
has stabilized grows with the number of blocks that succeed it in the chain.
Hence, by setting a high enough threshold to consider a block -- and the transactions therein -- as committed,
one can ensure an arbitrarily low probability of the block being discarded.
This is commonly known as \emph{finality} and has been formalized as a \emph{persistence} property by Garay et al.~\cite{garay2015bitcoin}.

Permissionless blockchains can thereby be used as a (probabilistic) totally-ordered distributed ledger service 
with unique features, in particular their resilience to Sybil attacks which characterize permissionless environments.
Permissionless blockchains support a wide range of geo-distributed applications, from cryptocurrencies to
general-purpose smart contracts, but unfortunately they are also known for their poor performance.

In recent years, the research community has contributed with important improvements to permissionless blockchains -- such as to Nakamoto's longest chain rule~\cite{ghost, Wood2014, DBLP:conf/fc/LewenbergSZ15, cryptoeprint:2018:104, 254398}, hierarchical and parallel chains~\cite{Eyal2016, 10.1145/3087801.3087809, Yu2020ohie,10.1145/3319535.3363213}, sharded blockchains~\cite{10.1145/2976749.2978389, kokoris2018omniledger, 10.1145/3243734.3243853, EthSharding2018}, 
Layer-2 approaches~\cite{lightning, hao2018fastpay, mavroudis2020snappy, perun,tumblebit}
BFT-based blockchains~\cite{10.5555/3241094.3241117, DBLP:conf/wdag/PassS17, Abraham2017, Gilad2017, Miller2016, 10.1145/3341301.3359636}, and Proof-of-X alternatives \cite{Kiayias2017,zamfir2017casper,Gilad2017,Asayag2018, dziembowski2015proofs,BeyondHellman:2017, chen2017security}.
%
These approaches have focused mostly on improvements to throughput and/or energy efficiency, bringing only modest improvements to commit latency. 
Some notable exceptions reduce commit latencies by sacrificing security or scalability (as we discuss in \S\ref{sec:rw}).
In fact, as of today, commit latencies remain notably high in mainstream permissionless blockchains  --
around 1 hour in Bitcoin and 3 minutes in Ethereum.


Such high commit latencies are especially prohibitive for many applications
that require low-latency transactions.
Examples range from merchant applications
that need to deliver goods quickly ($<30$ seconds)~\cite{karame2012},
such as point-of-sale purchases and retail vending machines, take-away stores, online 
shopping, supermarket checkouts and bike sharing systems~\cite{dotan2020, karame2012, bamert2013have, guo2018bike}, to 
blockchain-backed IoT devices and applications~\cite{hao2018fastpay}.
To overcome the high commit delays, merchants and service providers frequently adopt risky 0- or 1-confirmation policies~\cite{dotan2020}, 
which accept transactions as granted well before the underlying blockchain can provide sufficiently strong guarantees of their persistence. Such policies
are inherently vulnerable to double-spending attacks. 

The commit latencies of mainstream permissionless blockchains reflect two well-known limitations of Nakamoto's consensus: for correctness,
blocks need to be generated (on average) at a slow pace with respect to network latency, and a transaction should only be considered as committed (i.e., persistent) after it is followed by a long sequence of (slowly generated) blocks in the chain \cite{garay2015bitcoin}.
To further complicate matters, a recent study~\cite{silva2020impact} concluded that, due to the emergence of powerful mining pools, 
blockchain systems should wait for even larger blockchain suffixes before committing, which means that commit latency is bound to increase, rather than decrease.

This paper focuses on improving the latency of general-purpose permissionless blockchains. 
Our insight is based on the observation that the actual consistency needs of cryptocurrencies, which
constitute the bulk of today's most important permissionless blockchains
(i.e., almost 100\% in Bitcoin~\cite{10.3389/fbloc.2019.00007} and 44\% in Ethereum~\cite{silva2020impact}), 
can be satisfied without resorting to consensus~\cite{guerraoui2019consensus}.
With this in mind, we set out to build a general-purpose permissionless blockchain that offers a 
hybrid consistency model where two distinct transaction handling paths gracefully coexist, each one serving applications with differing consistency
needs. In detail, a novel fast path that handles weakly-consistent cryptocurrency transactions, 
together with the existing slow path that processes the remaining strongly-consistent transactions -- such as smart contract transactions -- in a consensus-based, 
totally-ordered fashion. 

This paper comprises contributions to achieve the above vision. \textbf{As a first contribution,}
we propose an extension to the traditional issue/commit model, by introducing the notion of \emph{causal promise of commitment}, or simply \emph{promise}. 
A promise is a new event in a transaction’s life cycle.
Informally, when a process promises a transaction $t$, that means that $t$ will be eventually committed by every correct process, even if $t$ is part of a double-spending attempt by a malicious user, while satisfying any causal dependencies of $t$.
Promises offer weaker consistency guarantees than traditional blockchain consensus -- most importantly, promises are only partially ordered.
Therefore, promises can be implemented by a faster protocol in a small fraction of the commitment latency.
Despite weaker, the guarantees of the promise event are still \emph{strong enough} to fulfil the consistency needs of cryptocurrencies. 
Namely, they are causally ordered and resist double-spending attacks. 
Table~\ref{table:commits} summarizes the key differences between promising and committing a transaction. 


\textbf{As a second contribution,}
we propose \name, a novel permissionless blockchain system that
extends permissionless blockchains based on the Bitcoin Backbone Protocol (BBP)~\cite{garay2015bitcoin} with a \emph{partially ordered promise fast path}.
This fast path allows cryptocurrency transactions, which
constitute the bulk of today's most important permissionless blockchains
(i.e., almost 100\% in Bitcoin~\cite{10.3389/fbloc.2019.00007} and 44\% in Ethereum~\cite{silva2020impact}) to be promised substantially faster than the original commit, in a large portion of correct processes.
Since today's general-purpose blockchains also support smart contracts, which typically have sequential consistency needs,
\name also supports strongly consistent transactions. Together, offers a \emph{hybrid} consistency model that is able to handle transactions  supports both types of transactions, as Table~\ref{table:commits} outlines.
To the best of our knowledge, \name is the first permissionless blockchain system to bring together fast partially ordered transactions with totally ordered, consensus-based transactions in a permissionless setting. 

\textbf{As a third contribution, } we implement \name as an extension of the Ethereum blockchain, which demonstrates the practicality of our proposal. 

\textbf{As a final contribution, } an evaluation with 500 processes 
in a realistic geo-distributed environment that shows that \name's
  fast path reduces the commit latencies of cryptocurrency transactions by an order of magnitude with negligible overhead when compared with Ethereum.

The rest of the paper is organized as follows.
\S\ref{sec:background} provides background on permissionless blockchains and describes a generic baseline protocol. 
\S\ref{sec:promise} defines the notion of promises and \S\ref{sec:parmenides} shows how \name extends the baseline with promises.
\S\ref{sec:cryptopromises} describes how we can leverage \name’s promises to implement low-latency cryptocurrencies.
\S\ref{sec:evaluation} evaluates our implementation of \name as an extension of Ethereum in a large-scale geo-distributed scenario.
\S\ref{sec:discussion} discusses the limitations of \name.
\S\ref{sec:rw} surveys related work, and \S\ref{sec:conclusion} concludes the paper.

\begin{table}[]
  \center
\footnotesize
\begin{tabular}{|l|l|}
\hline
Symbol 	& Meaning\\ \hline
$D$		& Maximum message delivery delay\\ \hline
	$BBP$	& Bitcoin Backbone Protocol~\cite{garay2015bitcoin} \\ \hline
$B$		& Average block generation time\\ \hline
$C$		& BBP commit threshold (minimum chain suffix height)\\ \hline
\end{tabular}
\caption{Notation used in the paper}
\label{table:symbols}
\end{table}

\section{Background}
\label{sec:background}



Today's most valuable permissionless blockchains, such as Bitcoin or Ethereum, are based on the foundations laid by the consensus protocol proposed by Nakamoto~\cite{Nakamoto2008}.
Different studies~\cite{garay2015bitcoin,pass2017analysis,10.1145/3243734.3243814,10.1007/978-3-319-63688-7_10} have formally analyzed the common core of such blockchain protocols. 
Hereinafter, we adopt the term coined by Garay et al.~\cite{garay2015bitcoin}, \emph{Bitcoin Backbone Protocol} (BBP), to generically refer to the core protocol behind today's permissionless blockchains inspired by Nakamoto's consensus protocol.

Garay et al. ~\cite{garay2015bitcoin} established two main properties of BBP, persistence and liveness (further discussed below), 
and formally studied them under a strict (and unrealistic) network model. Subsequent works showed such properties also hold for more realistic network models~\cite{pass2017analysis,10.1145/3243734.3243814,10.1007/978-3-319-63688-7_10}.
In this work, we consider BBP under the latter models as our starting point, which we extend to build \name. 

In the next sections, we present the system model, the main properties of BBP, and conclude with an overview of the protocol.





\subsection{Assumptions}
\label{sec:bgsysmodel}
The BBP runs on a peer-to-peer network of processes and relies on the following assumptions~\cite{pass2017analysis,pass2017rethinking,Kiayias2017,10.1145/3243734.3243814}.
First, Byzantine adversaries control less than 50\% of the total mining power that is used to produce blocks 
and do
not have computational power to subvert the standard cryptographic primitives.
Any message broadcast by a correct process is delivered to every correct process with high probability, and with a maximum delay of $D$~\cite{pass2017rethinking,pass2017analysis,rocket2020scalable,Kiayias2017}.
This is often a hidden assumption of permissionless blockchains systems but it is in fact a critical aspect of the system that we do not only expose but embrace.
In fact, in Nakamoto consensus, the difficulty of the Proof-of-Work (PoW) puzzle is based on the maximum delivery delay D~\cite{pass2017analysis,rocket2020scalable}, and hence this is an assumption shared by Bitcoin and Ethereum.
The same applies to proof-of-stake blockchains such as Ouroboros~\cite{Kiayias2017} and Algorand~\cite{Gilad2017}, which assume a bounded delay. 
The correctness of BBP, and systems such as Bitcoin or Etherum, thus depends on the premise that the system takes much longer to generate a new block than to propagate it~\cite{garay2015bitcoin}.
Let us denote $B$ as the overall average block generation time.
Hence, we assume that $D/B$ is relatively small~\cite{garay2015bitcoin,pass2017analysis,cryptoeprint:2015/1019,10.1145/3243734.3243814}.

For presentation simplicity, we assume that all processes can produce (\emph{i.e.} mine) blocks and act as clients by submitting transactions to the system.
Moreover, we also assume that each correct client is collocated with a correct process participating in the protocol. Devising robust solutions to support correct clients that interact with possibly Byzantine processes running the protocol is an open problem, which is orthogonal to our work~\cite{10.1145/3422337.3447832}. 

A transaction is signed by the process that issues it.
The body of a transaction is an application-dependent payload (e.g. the target account and amount to transfer in a cryptocurrency transaction,
or, in a smart contract transaction, the target smart contract, its method and arguments).

Each transaction also carries a local sequence number, which is consecutively unique for that issuer\footnote{This is the case, for instance, with Ethereum's transaction nonces. Still, we note that BBP can generically abstract blockchain protocols without per-transaction sequence numbers.}.
Note that a Byzantine process $p$ can submit two or more distinct transactions with the same sequence number, with the intention of having some correct processes commit one transaction (and, hence, discard the other one(s) as invalid), and other corrects processes decide in the opposite direction. 
For any pair of distinct transactions that share the same sequence number, 
we say that each transaction is a \textit{double spending} of the other (and vice-versa). 

We summarize the notation used in the paper in Table \ref{table:symbols}.

\subsection{Properties}
\label{sec:bgproperties}

We start by defining the event of committing a transaction as follows.
If a correct process $p$ has a block $b$ in its blockchain followed by at least $C$ other blocks, we say that every transaction included in $b$ is \emph{committed} at $p$.


Garay et al.~\cite{garay2015bitcoin} define the following two main properties of the commit event of BBP, with high probability:%
\footnote{For presentation simplicity, we present a formulation that is simpler than Garay et al.’s original one. Namely, we omit some parameters that are orthogonal to our contributions in this paper, and explicitly use the term \emph{commit}.}





\begin{property}[Persistence]
If a correct process $p$ commits a block $b$, and consequently the transactions in $b$, then any correct process has block $b$ in the same position in the blockchain, from this moment on. 
\end{property}

\begin{property}[Liveness] 
If a correct process submits a transaction $t$ then all correct {}process eventually commit $t$.
\end{property}

%
%
%
%

Different works have formally studied the above properties, using different methods. Starting from Garay et al.'s initial analysis, which considered simplified network assumptions \cite{garay2015bitcoin}, subsequent works \cite{pass2017analysis,10.1145/3243734.3243814,10.1007/978-3-319-63688-7_10} have formally shown that BBP guarantees such properties under the more realistic assumptions that we consider in \S\ref{sec:bgsysmodel}. 

Two important corollaries can be easily drawn from the above properties.
The first one is that, if a correct process $p$ commits a transaction $t$, then all other correct processes will eventually commit $t$.
A second corollary is that, if a correct process $p$ commits transaction $t_A$ and, later, transaction $t_B$, then any other correct process $q$ will
commit both transactions in the same order -- in other words, the commit event is totally ordered.

Together, the above properties and the associated corollaries
constitute strong consistency guarantees, which BBP provides with high probability. 
These enable many geo-distributed applications
to operate even when operating under an adversarial permissionless environment.
For instance, 
state-machine replication can be built on top of \bbcp through smart contracts. 
Yet, these guarantees are provided at the cost of a high latency.

\subsection{\bbcp Algorithm} 
At a high-level, the algorithm works as follows.
Each process holds a local ledger of transactions that consolidates two components: a local copy of the blockchain (or, simply, local chain) and the \emph{mempool}. 
The local chain is organized as a cryptographically linked list of blocks.
Blocks have a monotonically increasing gap-free sequence number and each block includes a totally-ordered sequence of transactions. 
The mempool is a local queue of individual transactions not yet included in the local chain.
The transactions in the mempool are ordered after the transactions in the local chain.
When receiving a transaction, a process  
performs a series of  validations 
such as checking for funds, checking whether another transaction 
with the same identifier already exists in the local ledger (\emph{i.e.} a double-spend attempt) and 
verifying the transaction's digital signature, among others.
If the transaction is valid and not yet in the local chain, it is inserted in the mempool.


Processes initially share a genesis block $b_0$ and produce blocks by selecting a subset of transactions in the mempool and creating a proof that depends on these transactions and on the last block in the local chain. 
The proof is a Sybil-proof leader election mechanism, such as PoW, that ensures the process legally produced the block.
The puzzle difficulty of PoW is a function of $D$~\cite{pass2017rethinking}.


Upon producing a new block, a process appends it to its local chain
and broadcasts it.
When a block is delivered, it is validated before being included in the local chain.
Due to concurrency, 
it is possible that two or more processes produce 
two competing blocks $b_{i+1}$ and $b'_{i+1}$ 
that extend the sequence $(b_0, \ldots,b_i)$.
Processes will select either sequence $(b_0, \ldots,b_i, b_{i+1})$ or 
 $(b_0, \ldots,b_i, b'_{i+1})$,
using a \chainselectionrule (\CSR).
In Nakamoto consensus, the \CSR\xspace corresponds to selecting the chain with greatest PoW effort.
By relying on the \CSR, the protocol will make the system converge
by gradually agreeing on an increasing \textit{common blockchain prefix}.
%
Blocks, and the transactions therein, are considered \reddelivered, when
followed by a sequence of \kblocks.

\subsection{The weak consistency needs of cryptocurrencies} 
\label{sec:consensuslesstx}

Our work hinges on the observation that cryptocurrency transactions, also known as asset transfer transactions, do not need to be totally order~\cite{guerraoui2019consensus}. 
The intuition behind this observation is the following.
In an asset transfer scenario, the system maintains a set of accounts where each account maintains the number of assets held, for instance the amount of cryptocurrency in a given account. 
Each account has a single owner that can withdraw funds from this account and transfer them to other accounts. The other users in the system can only read the accounts balance and transfer funds to it (from their own accounts). 
Therefore, it is the sole responsibility of an account owner to order the assets withdrawn from that account.

Intuitively, this removes the need to order a transaction that withdraws assets from a given account relatively to every other transaction 
that is concurrently withdrawing from distinct accounts.
Therefore, instead of a total order across asset transfer transactions, Guerraoui et al.~\cite{guerraoui2019consensus} 
have shown that the semantics of asset transfer only require causal order and some mechanism that prevents double-spending attempts.

This observation unveils a notable opportunity to implement faster cryptocurrencies
that escape the high cost of agreeing on a total order.
This opportunity has been explored by recent proposals for weakly-consistent cryptocurrency protocols for permissionless systems, not just by Guerraoui et al. \cite{guerraoui2019consensus,collins2020online} but also by others \cite{sompolinsky2016spectre,otte2017trustchain,sliwinski2019abc,rocket2020scalable,DBLP:conf/wdag/KuznetsovPPT21}(more details in \S\ref{sec:rw}).

However, such proposals are not compatible with mainstream permissionless blockchains, whose general-purpose nature 
requires also serving transactions issued by applications with strong, sequential consistency requirements 
-- such as those issued to/by smart contracts. 
To the best of our knowledge, bringing weakly consistent cryptocurrency transactions to mainstream permissionless blockchains 
that also support strongly consistent smart contracts remains an open problem.
We address such problem in the next sections. 

\section{Promises}
\label{sec:promise}

\name extends the life-cycle of a transaction, $t$, with an additional event, a \emph{promise of commitment of $t$}, 
which we abbreviate to \emph{promise of $t$}.
Whenever a process $p$ is able to anticipate -- through some means, as we discuss later on --
that $t$ will eventually commit (by BBP), $p$ can immediately promise $t$.
The following property formally captures this notion.

\begin{property}[Eventually committed upon promised] 
If a correct process promises a transaction $t$ then all correct process eventually commit $t$.
\end{property}

\begin{figure*}[t]
\includegraphics[width=\textwidth]{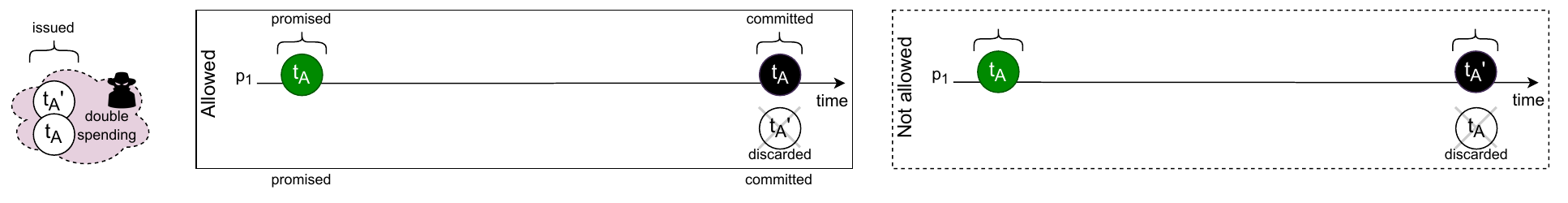}
\caption{An example illustrating the \emph{eventually committed upon promised} property. 
Transaction $t_A$, which belongs to a double-spending attack (along with $t_A'$) is promised by correct process $p_1$.
By the above-mentioned property, every correct processes will eventually commit $t_A$ and, consequently, discard its double-spending counterpart, $t_A'$.}
  \label{fig:promise-eventual-commit}
  \squeezeup
\end{figure*}

Figure \ref{fig:promise-eventual-commit} illustrates this property with an example.
Process $p_1$ promises transaction $t_A$, thus every correct process (including $p_1$) will eventually commit $t_A$.

This example also illustrates that promises resist double-spend attacks, similarly to commits.
In our example, a malicious user issues two conflicting transactions, $t_A$ and $t_A'$
in an attempt to double-spend. 
Recall that BBP ensures that, if one of such transactions commits, then the other conflicting
transaction(s) will not, with high probability.
Hence, since correct process $p_1$ promises $t_A$, then 
that process can immediately infer that, even if a double-spending attempt threathens $t_A$, 
the system will eventually commit $t_A$ and, thus, discard any conflicting transactions
($t_A'$ in the example).
This implies that an alternative example where the system instead committed $t_A'$ (Figure \ref{fig:promise-eventual-commit} right) is 
not allowed.

To be useful, promises must be delivered faster than commits.
To accomplish that,
we hinge on the observation that an important class of applications
do not require the strong consistency guarantees of commits.
Therefore, promises offer weaker, but still \emph{strong enough},
guarantees for such applications.
As we show in \S\ref{sec:cryptopromises}, cryptocurrencies are one notable example
that can be implemented by the weak guarantees of promises.


\begin{figure*}[t]
\includegraphics[width=\textwidth]{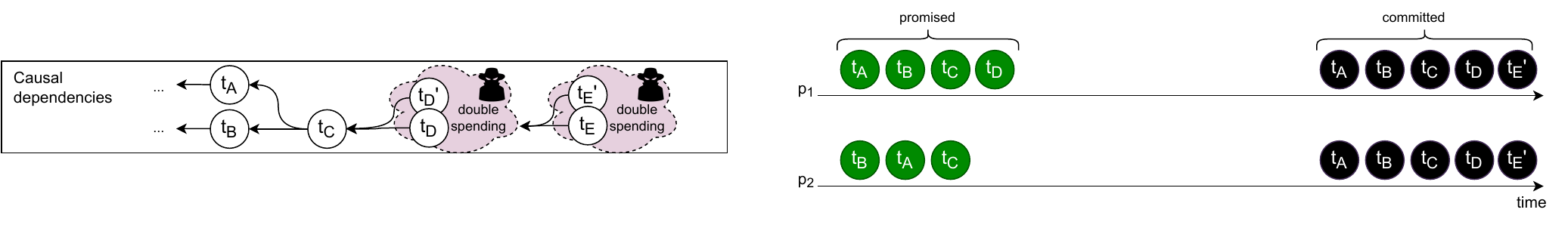}
\caption{An example in a 2-process system illustrating how promises are weaker than commits. Only promise and commit events are shown (we omit issue events). 
The example illustrates the two ways in which the promise order is more relaxed than the commit order: i) promise order is partial (more precisely, a causal order), not total; 
and ii) not every transaction that commits needs to have been promised at every process.}
  \label{fig:promise-simple}
  \squeezeup
\end{figure*}

Compared to commits, promises are fundamentally weaker
along two dimensions, which Figure \ref{fig:promise-simple} illustrates with an example.
As a first relaxation, promised transactions 
are only partially ordered by causal dependencies
among them. In other words, promises occur in causal order.

For instance, each process in Figure \ref{fig:promise-simple} promises transactions in different orders, 
which all satisfy the causal dependencies between transactions.
In contrast, BBP commits the transactions in total order, 
as depicted on right-hand side of Figure \ref{fig:promise-simple}.
Causally ordering events in a distributed system
is well-studied to be easier than totally ordering them \cite{causal95},
hence implementing promises is an easier problem than the one solved by BBP.


As a second relaxation, correct processes do not need to agree on promising a given transaction.
A transaction $t$ may be promised at some correct process, but not necessarily at all other correct process.
This relaxation allows best-effort implementations that, under worst-case scenarios,
can simply give up their efforts to promise some transaction at a subset of processes.

For example, transaction $t_D$ in Figure \ref{fig:promise-simple} is promised at process $p_1$ 
but not at $p_2$. 
Although $p_2$ does not promise $t_D$, $p_2$ later commits $t_D$.
Since commits ensure strictly stronger properties than promises,
the commit of $t_D$ properly replaces the missing promise.
Concretely, for a client at $p_2$ with weak consistency demands that wishes to know
when $t_D$ has been promised, the fallback is to wait longer
until $t_D$ commits (or, instead, is discarded by BBP).

At an extreme scenario, transactions $t_E$ and $t_E'$ (a double-spending pair) are not promised 
at any process. In this case, processes will eventually learn that one of such transactions 
has committed ($t_E'$ in our example). 
This scenario is different than the previous ones because, since no process promises neither $t_E$ nor $t_E'$, 
the \emph{eventually committed upon promised} property does not hold and, thus, a correct system is free to 
agree on either transaction to be committed at every correct process -- concretely, an 
alternative example where every correct process committed $t_E$ would also be correct.

On a related note, suppose that, after $t_E'$ commits at $p_2$, that process receives a new transaction,
$t_F$, that causally depends on $t_E'$.
By the same rationale as above, we allow $p_2$ to promise $t_F$ since its causal
dependency has already committed (instead of promised) at $p_2$.
%
%
Leveraging the above observations, we can finally formulate the (partial) causal order of promises, 
as follows.

\begin{property}[Causal promise order]
No correct process promises a transaction $t$
before promising or committing $t$'s causal dependencies.
\end{property}

Hereafter, we assume that transactions carry a metadata field that denotes its 
causal dependencies (using some suitable causality tracking mechanism).
Also, a transaction $t$ issued by correct process $p$ causally depends on every transaction
previously issued by $p$\footnote{This is a usual definition in distributed algorithms with causal order \cite{lamport78,causal95}.}.

As a final requirement, which has been implicit so far, we must enforce that BBP's commit order also
respects the causal dependencies that determine the causal promise order.
This ensures the commit total order is a linear extension of the causal
promise orders observed at the different processes.
This requirement prevents that the commit order introduces causality anomalies 
that are absent in the promise order.

\begin{property}[Causal commit order] 
No correct process commits a transaction $t$
before committing $t$'s causal dependencies.
\end{property}

Finally, we remark that, in general-purpose blockchains, different applications or operations within an application, might have
distinct consistency demands. This explains why we decided to add the promise event to the original issue/commit model, rather than replacing the commit event with the promise event. The richer issue/promise/commit event set offers a hybrid consistency model. It allows different applications, or different operations from the same application, to choose whether to wait for a transaction to be promised or committed, depending on the required guarantees.

\section{\name}
\label{sec:parmenides}

The main goal of \name is to provide a fast path that promises transactions
much sooner than BBP commits them.
To accomplish that goal, we exploit the fact that the \emph{partial} causal order
across promised transactions
is fundamentally easier to achieve than the
\emph{total} causal order that is required across committed transactions.

Besides the promise fast path, \name still relies on BBP, for two main reasons.
First, \name's fast path is best-effort, since some
unlikely scenarios caused by double-spending attempts prevent it from
promising specific transactions.
To handle such cases, \name uses BBP to let
clients determine whether an unpromised transaction is effectively committed or not.
The BBP slow path also serves to provide a total causal order
when application semantics have strong consistency demands.
Table \ref{table:commits} summarizes the essential differences between both paths.

Next, we present \name's design. 
\S\ref{sec:algoverview} starts with an overview of \name. 
\S\ref{sec:mainalg} then details the main algorithm of \name. 
\S\ref{sec:slicing} discusses how \name can be tuned to maximize its resilience and \S\ref{sec:causal} details how \name enforces causal order.

\subsection{Overview}
\label{sec:algoverview}

\begin{figure*}[t]
\center
\includegraphics[width=0.7\textwidth]{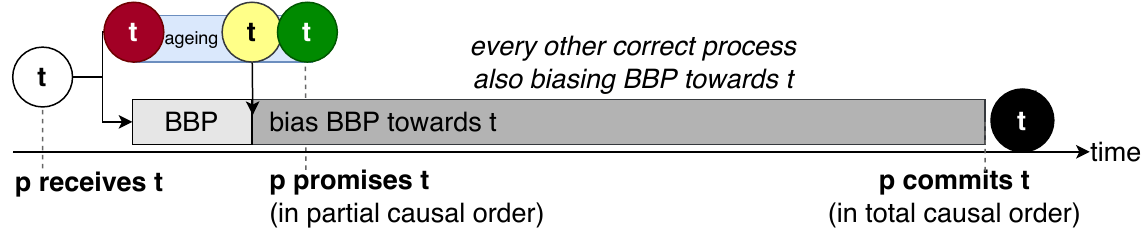}
\caption{An example where \name's ageing fast path promises a transaction, $t$. Later, the biased BBP slow path
  eventually commits $t$, even if $t$ was issued by an attacker that issued a double-spending transaction that is competing with $t$.}
  \label{fig:overview-promise}
  \squeezeup
\end{figure*}

In \name, any transaction that is received by a process (either issued locally or received from another issuer)
is fed simultaneously into two modules, which handle the transaction in parallel.
We first present the high-level semantics of each module (we defer
their algorithms to the following section).
Next, we describe how both building blocks are used together to
support fast transaction promises.
For presentation simplicity, we here assume that transactions have no
causal dependencies. 
We lift this simplification in \S\ref{sec:causal}.

\subsubsection{Fast path}
The fast path runs a transaction \emph{ageing} algorithm.
As a transaction $t$ ages at some process $p$, its current age is labelled as
a color (red, yellow or green), which we denote by $color_p(t)$.
A transaction is initially red (when received at the process), but may later transition to yellow, and finally to green,
as Figure \ref{fig:overview-promise} illustrates.
  
Not all transactions reach the green state; in fact, double-spending attempts
may cause some transactions to stop their aging in red or yellow (see \S\ref{sec:mainalg}).
In any case, it takes a short time (considerably shorter than BBP's latency)
for any process to determine the final age of a transaction.

The ageing is determined locally at each process, with no cross-process coordination
taking place. A consequence is that it cannot guarantee that
every correct process reaches the same final age for a given transaction.
Instead, it ensures a weaker guarantee:

\begin{property}[Bounded age consistency]
If at least one process, $p_i$, ages some transaction, $t$, up to green ($color_{p_i}(t)=green$),
then any other correct process $p_j$ must have aged $t$ up to yellow or green ($color_{p_j}\in\{yellow, green\}$).
  \end{property}

\subsubsection{Slow path}
The second module consists of BBP augmented with a novel feature, called \emph{biased BBP commit},
which we will describe shortly.
Upon receiving some transaction $t$ as input, the BBP's consensus protocol eventually
decides whether $t$ commits within the total order of the blockchain, or not.
Since this takes a long time, BBP constitutes the slow path of \name.

For any transaction $t$, each process has the option to \emph{bias BBP towards
$t$}.
By default, this option is disabled for new transactions that are received by \name.
If activated by every correct process for a common transaction, the biased BBP option
provides a powerful guarantee:

\begin{property}[Biased transaction selection]
If every correct process biases BBP towards a given transaction $t$, then
  BBP will eventually commit $t$, even if double-spending transactions
(of $t$) have also been issued.
\end{property}

\subsubsection{Putting it all together}
\name uses the ageing protocol as a best-effort mechanism to trigger promises.
When a process $p$ ages some transaction $t$ up to green, $p$ can immediately promise $t$. 

Of course, \name needs to ensure that, if at least one correct process promises $t$,
then BBP eventually commits $t$ at every correct process.
This is not trivial since $t$ may have been issued by an attacker that,
concurrently, also issued a double-spending transaction, $t'$; hence, if no additional care was taken,
BBP might end up
committing $t'$ and, thus, discarding $t$.
To address this, \name connects the ageing protocol and the biased BBP feature as follows:
as soon as a process $p$ has aged a transaction up to (at least) yellow,
$p$ biases BBP towards $t$. This is depicted in Figure \ref{fig:overview-promise}.

From the above-mentioned properties (bounded age consistency and
biased transaction selection),
one can conclude that this approach ensures that, if at least one process $p$
promises a transaction $t$ in the fast path,
then $t$ will be eventually committed (by BBP) at
every correct process.
In fact, if $p$ promises $t$ (hence, $t$ has aged up to green in $p$),
then $t$ will at least age up to yellow at every other correct process.
Therefore, every correct process will bias BBP towards $t$. Consequently,
BBP will eventually commit $t$ -- even in the presence of a
double-spending transactions (conficting with $t$).

It is worth noting that there are scenarios, caused by double-spending attempts, where a process $p$ cannot promise
a transaction $t$ and, therefore, needs to wait for BBP's
slow path to learn whether $t$'s outcome. 
We discuss such scenarios in the next section. 

\subsection{Main algorithm of \name}
\label{sec:mainalg}

\begin{figure*}[t]
\includegraphics[width=\textwidth]{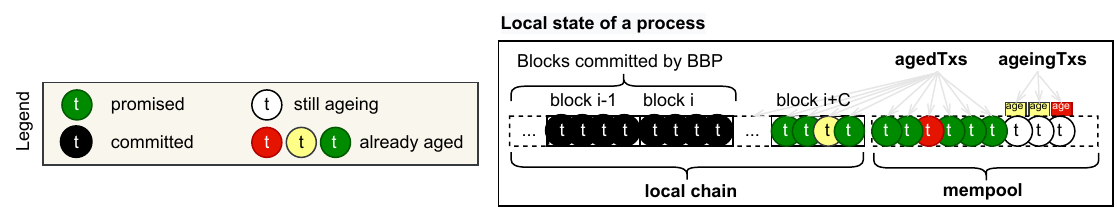}
  \caption{Local state of a process in \name.}
  \label{fig:state}
  \squeezeup
\end{figure*}

\name comprises two main modules,
the transaction ageing and the biased BBP commit.
Both run on top of an instance of BBP, intercepting specific routines of BBP.
We take advantage of the extensible design of BBP, customizing it with relatively simple extensions
that affect the incoming transaction and chain validation routines%
\footnote{More precisely, inside the \emph{content validation predicate ($V(\cdot)$)} and the \emph{environment} (more precisely, where the \emph{input tapes} of each process are determined) according to Garay et al.’s terminology \cite{garay2015bitcoin}, which are external to BBP and, thus, can be modified without hurting the correctness of the underlying BBP.}.
We provide details on our concrete extension of Ethereum in \S\ref{sec:implementation}.

At each process, both modules share the state 
depicted in Figure~\ref{fig:state}.
This includes BBP's state; namely, a local chain and a mempool.
Moreover, each process also maintains each transaction's age (or color),
stored in the \emph{agedTxs} and \emph{ageingTxs} sets which comprises transactions that have already reached their final age, and transactions whose ageing is still ongoing, respectively.

We now present the algorithms of each module.

\begin{algorithm}[t]
  \caption{Ageing protocol (\name’s fast path)}\label{alg:ageing}

  \SetKwProg{Initially}{Initially}{}{}

 \SetKwData{ageing}{ageingTxs}
  \SetKwData{aged}{agedTxs}

\Initially{}{
        $\ageing \gets \varnothing$\;
        $\aged \gets \varnothing$\;
}

\SetKwProg{Fn}{Function}{:}{}

  \SetKwFunction{FAge}{age}
  \Fn{\FAge{transaction $t$)}}{
        $(t_{aged}, ageval) \gets lookup(\aged, t.issuer, t.seqno)$\;
        \If{$t_{aged}$}
        {
        \KwRet{$ageval$}\;\nllabel{alg:ageing:age:aged}
        }
        $(t_{ageing}, ts) \gets lookup(\ageing, t.issuer, t.seqno)$\;
        \eIf{$t_{ageing}$}
        {
        \KwRet{$(now - ts)/D$}\;\nllabel{alg:ageing:age:ageing}
        }
        {
        \KwRet {$\perp$}\;\nllabel{alg:ageing:age:none}
        }
  }
  
  \SetKwFunction{FPromise}{promise}
  \SetKwProg{Thread}{Procedure}{:}{}
  \Thread{AgeingMonitor}{
	\While{true}
	{
		\For{each $t$ in \ageing}
		{
			\If{$age(t) = AT$ \nllabel{alg:ageing:AT}}
			{
				remove(\ageing, $t$)\;
				add(\aged, ($t$, $AT$))\;
				\FPromise{$t$}\;\nllabel{alg:ageing:promise}
		}
	}
  }
}
	
    \SetKwProg{NewTx}{Upon receiving}{:}{}
    \NewTx{transaction $t$} {
    	$t_{prev} \gets lookup(\aged \cup \ageing, t.issuer, t.seqno)$\;
	\eIf{$t_{prev}=null$} {
		add(\ageing, $t$, now)\;\nllabel{alg:ageing:startageing}
	}
	{
		\If{$t_{prev} \neq t$}
		{
			reject $t$ from mempool\;\nllabel{alg:ageing:reject}
			\If{$isAgeing(t_prev)$} 
			{			
				add(\aged, ($t_{prev}$, $age(t_prev)$))\;
        remove(\ageing, $t_{prev}$)\;\nllabel{alg:ageing:stopageing}
			}
		}
	}

}

  \SetKwFunction{FColor}{color}
  \Fn{\FColor{transaction $t$)}}{\nllabel{alg:ageing:color}
    \Switch{$age(t)$}{
            \Case{$\perp$}{
              \KwRet{$\perp$}
            }
            \Case{$AT$}{
              \KwRet{green}
            }
            \Case{$AT-2$}{
              \KwRet{yellow}
            }
            \lOther{
              \KwRet{red}
            }
    }
  }

\end{algorithm}

\subsubsection{Fast path: transaction ageing}

The transaction ageing fast path runs Algorithm \ref{alg:ageing}.
A transaction $t$ can be ageing, be already aged,  or in none of such states.
The age of a transaction $t$ at a given process is an integer value,
determined
by function \emph{age} in Algorithm \ref{alg:ageing}.
This value is translated to a color, as we explain later on.
Before a transaction starts to age, we simply say that its age/color is undefined (denoted by $\perp$). 

When receiving a transaction $t$, a process $p$ assigns a local physical timestamp to $t$
and adds $t$ to \emph{ageingTxs} (ln. \ref{alg:ageing:startageing}, Alg. \ref{alg:ageing}).
From that instant on, we say that $t$ is \emph{ageing}.
While a transaction $t$ is ageing, its age
is given by how much time has already passed since $t$'s timestamp, measured in units of $D$ (ln. \ref{alg:ageing:age:ageing}). 
The ageing of $t$ stops in two situations: 

\begin{enumerate}

\item When the age of $t$ reaches a
system-wide \emph{ageing threshold} (AT) (ln. \ref{alg:ageing:AT}).
For presentation simplicity, in this section we
assume $AT=4$. 
In \S\ref{sec:slicing}, we show that lower $AT$ values are not appropriate 
and, more importantly, higher values can achieve higher resiliency under targeted attacks at the expense of latency.

\item When the local process 
receives a double-spend of $t$ (ln. \ref{alg:ageing:stopageing}).

\end{enumerate}

In either case, we say that $t$ has aged, and its age is final.
In the case of double-spending attempts, where at least two conflicting transactions $t$ and $t'$ are 
received by a process $p$, only the first one of them that reaches $p$ is accepted into the mempool and
ages at $p$. 
The remaining (double-spending) transactions are rejected from the mempool (ln. \ref{alg:ageing:reject}) and,
consequently, will
always have an undefined age ($\perp$) at $p$.

The age values are translated into a color (ln. \ref{alg:ageing:color}).
When a transaction $t$ starts ageing, its color is red.
As soon as its age reaches $AT-2$, its color changes to yellow.
Finally, when its age is $AT$, $t$'s color changes to green and,
consequently, the local process promises $t$ (ln. \ref{alg:ageing:promise}).

\begin{figure*}[t]
\center
\includegraphics[width=0.5\textwidth]{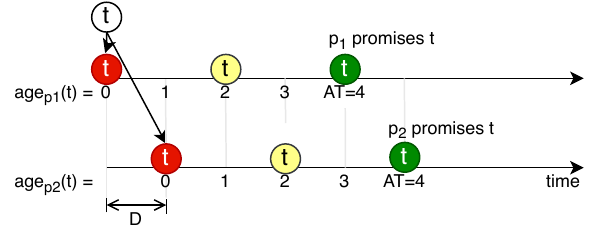}
\includegraphics[width=0.5\textwidth]{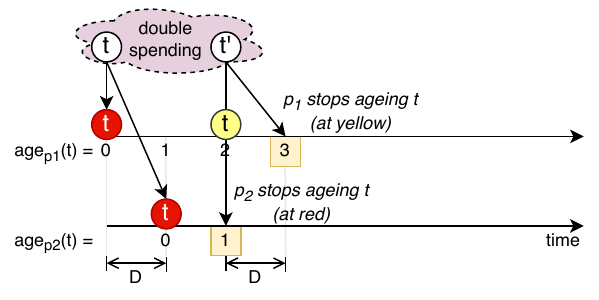}
\caption{Examples illustrating that the age of some transaction $t$ at different processes can diverge up to 2.}
  \label{fig:agebound}
  \squeezeup
\end{figure*}

Therefore, transactions issued by correct processes (which, by definition, do
not have double-spends) will always age up to green and hence be promised in the fast path at
every correct process. 
In contrast, a transaction $t$ that is issued as part of a double-spend attempt and is the first (among its
conflicting transactions) to be received by a process $p$ may age up
to red, yellow or green -- depending on how much time it takes until a double-spend $t'$ is subsequently 
received at $p$.
In any case, the first transaction received at $p$, $t$, will be added to the local mempool and thus $p$ 
will try to include $t$ in new blocks that $p$ 
tries to mine; in contrast, $p$ will reject $t'$ from the local mempool, hence
$p$ will never mine blocks with $t'$.


Since transactions are delivered at different times at distinct processes,
their final ages or colors are not guaranteed to be consistent across the system. 
Still, since the behaviour of the broadcast layer is bounded by a maximum delivery delay (D), 
the potential divergence is bounded and, as we prove next, always meet the 
\emph{bounded age consistency} property (see previous section).

This is illustrated in Figure~\ref{fig:agebound}. 
It may occur that $t$ ages up to green (thus, promised in the fast path) in some correct processes, 
but only up to yellow in others. 
Or that $t$ ages up to yellow at some correct processes, but only up to red in others.
Generalizing these examples, we can formulate and prove the following lemma.

\begin{lemma} 
\label{lemma:bounds}
If a correct process $p$ ages a transaction $t$ up to green (i.e., $age_p(t)=AT$) then: 
(a) for any correct process $q$,
$t$ ages at least up to yellow ($age_q(t)\geq AT-2$);
further, (b) for any transaction $t’$ that conflicts with $t$, then $age_q(t’)=\perp$ for every correct process $q$. 
\end{lemma}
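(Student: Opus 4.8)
The plan is to reason entirely in continuous (real) time and only at the end convert elapsed times into ages, exploiting the facts that ages are measured in units of $D$ and that every correct process re-gossips each transaction it admits into its mempool, so that the $D$-bounded delivery guarantee applies from the first correct receipt onward. Write $a$ for the instant $p$ receives $t$; by hypothesis $p$ successfully ages $t$, so $p$ receives no conflicting transaction during the entire window $[a, a + AT\cdot D)$ and promises $t$ exactly at $a + AT\cdot D$. I would first isolate two timing facts from which everything follows. (i) Every correct process $q$ receives $t$ by time $a + D$: since $p$ is correct and gossips $t$ at $a$, the broadcast guarantee delivers $t$ to all correct processes within $D$. (ii) If any correct process $q$ receives a conflict $t' \in C(t)\setminus\{t\}$, it does so no earlier than $a + 3D$: $q$ would gossip $t'$ on receipt, so $p$ would receive $t'$ within a further $D$, and $p$ receives it strictly after $a$ (it aged $t$ first); hence were $q$'s receipt earlier than $a + 3D$, $p$ would see $t'$ strictly inside $(a, a + AT\cdot D)=(a, a+4D)$, contradicting the assumption that $p$ ages $t$ undisturbed through its window.

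Given (i) and (ii) I would derive both parts. Part (2) is immediate: any conflict $t'$ reaching a correct $q$ arrives at time $\geq a + 3D > a + D \geq$ (the time $q$ received $t$), so $q$ already holds $t$ in its ageing/aged set and therefore rejects $t'$ without ageing it (line \ref{alg:ageing:reject} of Algorithm \ref{alg:ageing}); hence $age_q(t')=\perp$. For part (1), the first claim $age_q(t)\neq\perp$ follows because, by (i), $q$ receives $t$ by $a+D$, and by part (2) it has aged no conflict beforehand, so $q$ starts ageing $t$. For the quantitative claim I would split into two cases, evaluated at the promise instant $a + AT\cdot D$. If $q$ never receives a conflict, it ages $t$ continuously from $\leq a+D$, so $age_q(t) \geq AT-1$. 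Otherwise $q$ freezes $age_q(t)$ when the first conflict arrives, at time $\geq a+3D$ by (ii); since $q$ began ageing at $\leq a+D$, the frozen value is $\geq (3D - D)/D = 2 = AT-2$. Either way $age_q(t)\geq AT-2$, and this is exactly where the worst-case gap of $2$ originates: it is the sum of two compounded $D$-hops, $t$ travelling from $p$ to $q$ and the conflict travelling back from $q$ to $p$.

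I expect the main obstacle to be fact (ii): it must combine the precise meaning of ``successfully ages'' (no conflict seen \emph{anywhere} in the full window $[a,a+AT\cdot D)$, not merely at the promise instant) with the Byzantine issuer's freedom to inject conflicting transactions at arbitrary correct processes at arbitrary times, and it leans crucially on correct processes re-gossiping, which is what tames an otherwise unbounded adversary. A secondary subtlety is the freeze semantics: I must track that a conflict received after $t$ has begun ageing halts $age_q(t)$ at its then-current value rather than resetting it, since the $AT-2$ bound is tight precisely in that frozen case (line \ref{alg:ageing:stopageing}). Finally, because the $D$-delivery guarantee is only probabilistic, each invocation of (i) and (ii) holds with high probability; as only a constant number of deliveries are involved (delivery of $t$ to each correct process, and of each conflict back to $p$), a union bound yields the overall ``with high probability'' qualifier in the statement.
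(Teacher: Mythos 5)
Your proof is correct and takes essentially the same route as the paper's: both are timing arguments from the bounded delay $D$, bounding the spread of receipt times of $t$ across correct processes by $D$ and showing that any conflicting $t'$ can reach a correct process only $\geq (AT-1)D$ after $p$ first received $t$ (else $p$ would see $t'$ inside its ageing window), which yields the frozen age $\geq AT-2$ and $age_q(t')=\perp$ exactly as in the paper's condensed proof-by-contradiction. If anything, yours is more careful where the paper writes ``it is easy to show'': you make explicit the correct-process relay needed to handle a Byzantine issuer (the paper loosely appeals to delivery within $D$ of broadcast ``by their issuer'') and the union bound behind ``with high probability''; the only blemish is cosmetic hardcoding of $AT=4$ --- the $a+3D$ and $(3D-D)/D$ steps should read $a+(AT-1)D$ and $\bigl((AT-1)D-D\bigr)/D$ for general $AT$, which changes nothing in the argument.
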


\begin{proof} 
(1) Suppose, by contradiction, that either (a) $age_q(t)<AT-2$ or (b) $age_q(t)=\perp$; this implies that either (a) $q$ received $t’$ either less than $2D$ after $q$ first received $t$, or (b) that $q$ received $t’$ before $t$, respectively. 
Since $p$ received $t$ first and $t’$ more than $4D$ after (since $age_p(t)=AT$), then hypothesis (a) and (b) only occur in executions where $t$ or $t’$ was delivered at $q$ or $p$ (resp.) more than $D$ after they were broadcast by their issuer, which contradicts the assumption of a bounded delivery delay, $D$. (2) Suppose, by contradiction, that $age_q(t’)\neq\perp$. This implies that $q$ received $t’$ before $t$, which corresponds to hypothesis (b) above, which is impossible under the bounded delivery delay assumption.
\end{proof}

From sub-lemma (a) above, we directly obtain that the proposed ageing algorithm satisfies the \emph{bounded age consistency} property.
Due to space constraints, we ommit the correctness arguments for the remainder of this section.

\begin{algorithm}[t]
  \caption{Biased BBP commit (\name’s slow path)}\label{alg:slowpath}
  
    \SetKwFunction{FRRS}{RRS}

  	\SetKwProg{NewTx}{Upon receiving}{:}{}
    	\NewTx{chain $b_0, ..., b_n$} {
		\For{each $b_i (0\geq~i\geq~n)$ not yet in the local chain}
		{
		  \For{each transaction $t_i \in b_i$ that conflicts with
                  a previously received transaction $t$}
			{
				\If{$\FRRS{t} > n-i$}
				{
					reject chain suffix $b_i,...,b_n$\;\nllabel{alg:slowpath:rsscondition}
				}
			}
		}
	}

  \tcc{Simplest implementation}
  \SetKwProg{Fn}{Function}{:}{}
  \Fn{\FRRS{transaction $t$)}\label{alg:slowpath:rsssimple}}
  {
  	\eIf{$color(t) \in \{yellow, red\}$}
	{\Return{$C$}\;}
	{\Return{0}\;}
  }

  \tcc{Progressive variant (see \S\ref{sec:slicing})}
  \Fn{\FRRS{transaction $t$)}\label{alg:slowpath:rsssprogressive}}
  {
  	\tcc{Assuming $AT=2(C+1)$}
	{\Return{$\floor{age_p(t)/2}$}\;}
  }

\end{algorithm}

\begin{figure*}[t]
\center
\includegraphics[width=\textwidth]{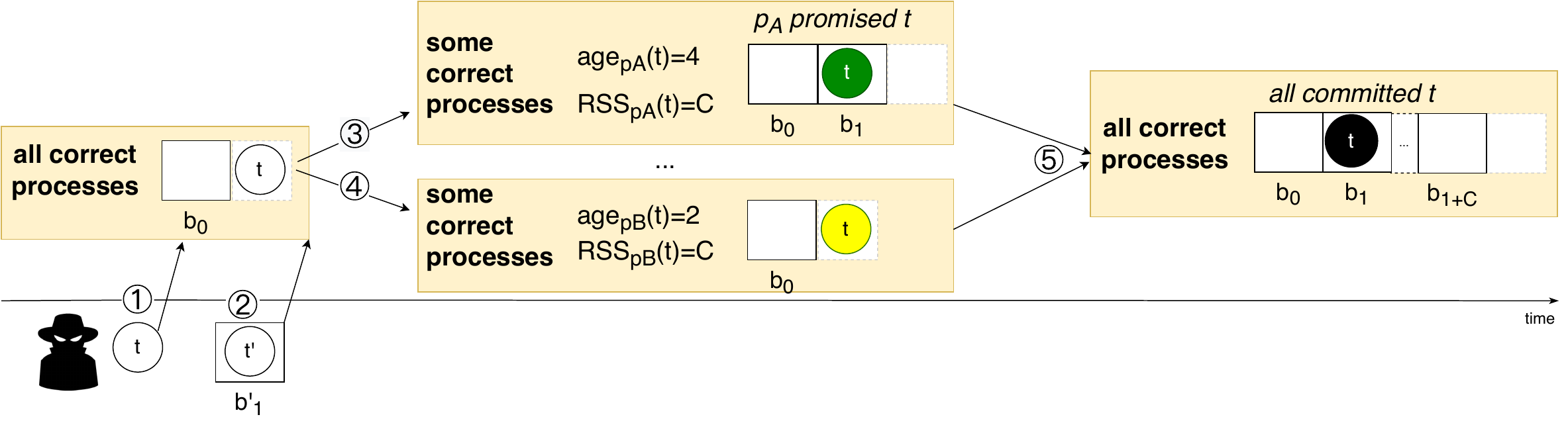}
\includegraphics[width=\textwidth]{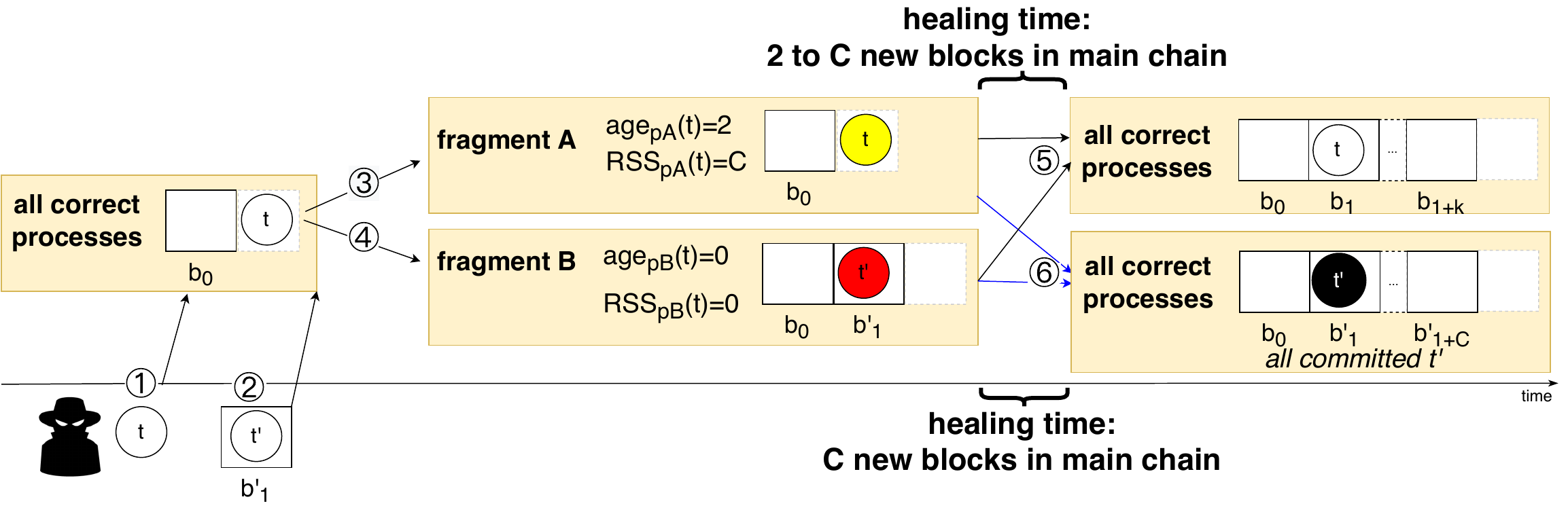}
\caption{Example of \name handling a transaction that is part of a double-spending attempt, with $AT=4$. Top: transaction ages up to green (thus, is promised) in at least one correct process. Bottom: transaction does not age up to green in any correct process.}
\label{fig:Parmenides-low-latency-scenarios}
\squeezeup
\end{figure*}

  \subsubsection{Slow path: Biased BBP commit}
We now focus on the slow path of \name, the biased BBP commit feature.
As outlined in \S\ref{sec:algoverview}, the biased BBP commit is activated by 
each process for a given transaction, $t$,
as soon as a $t$ ages up to yellow (at least).
According to the \emph{biased transaction selection} property (see \S\ref{sec:algoverview}), 
we expect that, if every correct process biases BBP towards $t$, then
BBP eventually commits $t$.

To implement the biased BBP commit feature, we add two restrictions on BBP's operation.
A first restriction is already described in Algorithm \ref{alg:ageing}.
From a set of double-spending transactions, only the first one
that is received at a process $p$ will be included in $p$'s mempool and
aged at $p$; any subquently received double-spends are rejected from $p$'s 
mempool.

We now introduce a second restriction, which affects BBP's CSR and is shown in
Algorithm \ref{alg:slowpath}.
For each new block $b_i$ in a new chain that a process $p$ receives,
$p$ checks whether block $b_i$ has any transaction that is a double-spending
of another transaction $t$ that $p$ had already received.
If so, the age of $t$ (at $p$) determines whether the offending block $b_i$
can be accepted or must be rejected.
If $t$ has aged at least to yellow at $p$, then $b_i$ is rejected, together with any subsequent blocks.

There is, however, an exception to this rule.
When $b_i$ is followed by at least $C$ blocks in the received chain,
\name no longer rejects $b_i$. We present the rationale behind this exception shortly.
Algorithm \ref{alg:slowpath} captures the above rule (and its exception) 
with
the \emph{required replacement suffix (RRS)} function. 
This function returns $C$ for those transactions
which the local process has aged to at least yellow; and 0 otherwise.


To illustrate, let us consider the first example in Figure~\ref{fig:Parmenides-low-latency-scenarios} (top). 
In this example, some correct processes (such as $p_A$) promise $t$,
as soon they age $t$ up to green
(\numcircledmod{3}).
From the \emph{bounded age consistency} property,
we know that even if, due to a double-spending transaction $t'$, other correct processes (such as $p_B$) 
might not age $t$ up to green, they are guaranteed to age
$t$ up to yellow (\numcircledmod{4}). 
Therefore, at that point, every correct process (either $p_A$ or $p_B$) biases BBP towards $t$.
According to the restrictions presented above, this means
that: 

\begin{enumerate}
\item $t$ has been included in every correct processes' mempool, thus they are all trying to 
mine a new block with $t$ (unless a block with $t$ is already
in their local chains), and any double-spend of $t$ is excluded from their mempools; and
\item all correct processes are rejecting any chains that contain a 
double-spend of $t$ \emph{unless the block containing such double-spend is suffixed by at least $C$ blocks}.
\end{enumerate}


Together, both observations imply that
the whole mining power of correct processes will be used towards appending
$t$ to their chains and continuing building on a chain with $t$.
The probability that an attacker is able to generate a chain containing a double-spending of $t$ 
followed by at least $C$ blocks is arbitrarily low. 
Therefore, the correct processes biasing BBP towards $t$ will never replace
the chain with $t$, which they are collectively building, by a fork with a double-spend of $t$. 
Hence, every correct process will eventually commit $t$ by BBP.


We also need to take other scenarios into account. 
These are illustrated in 
Figure~\ref{fig:Parmenides-low-latency-scenarios} bottom. 
Here, different correct processes only age $t$ up to either red or yellow. 
Those correct processes that age $t$ up to yellow, such as $p_A$ (\numcircledmod{3}),
bias BBP towards $t$, thus reject chains with any double-spending transaction $t'$ since $RRS(t)=C$.
In contrast, other correct processes that only age $t$ up to red, such as $p_B$ (\numcircledmod{4}),
remain open to accept chains with either $t$ or $t'$, since $RRS(t)=0$.

This situation can be problematic if the Byzantine process that issued the double-spend further manages
to extend the current main chain with a new block comprising $t'$.
In this situation, processes like $p_A$ will reject such new chain, while processes like $p_B$ 
will accept it (by the longest chain rule of BBP). 

We call this a \emph{fragmentation attack}.
It introduces an artificial fork that divides the mining power of correct
processes across two \emph{fragments}, denoted $A$ and $B$ in Figure \ref{fig:Parmenides-low-latency-scenarios}.
While the fragmentation persists, 
the attacker’s mining power will be temporarily closer to the 
largest correct fragment’s mining power, which may harm the robustness of BBP.



To mitigate the impact of a fragmentation attack, \name is able
to self-heal upon any successful fragmentation attack.
To understand how,
let us again focus on the bottom of Figure \ref{fig:Parmenides-low-latency-scenarios}.
A first scenario (\numcircledmod{5}) is when the fragment of correct processes biasing BBP towards $t$ are able to grow their chain faster than the opposite fragment (holding a chain with $t'$), but not vice-versa.
This fragment will be naturally healed by BBP's longest chain selection rule as soon as the main chain grows to be larger than the opposite fragment's chain. 

The second scenario (\numcircledmod{6}), however, may take longer to heal.
In this case, the fragment holding a chain with $t'$ happens to have the majority of correct processes. 
Therefore, it is likely that this chain will tend to grow faster than the chain with $t$. 
This is where the $C$ bound on $RRS$ becomes useful. The fragment holding a chain with $t$ will
accept the chain with $t'$ as soon as it becomes suffixed by $C$ blocks, which heals the fragmentation attack.
It should be noted that, in the scenario we are considering, no correct process has
promised $t$. Therefore, the system of correct processes can safely converge towards committing $t'$ and, thus, discard $t$.

\subsection{Trading fast path latency for robustness}
\label{sec:slicing}


The previous section showed that, with the minimal $AT=4$ value, a single fragmentation attack may require waiting up to $C$ block generation rounds 
to heal. 
In this section, we explain how \name can be configured to substantially mitigate the chances of success of the above attack.
The key insight is that, by increasing $AT$, \name can reduce the time it takes to heal fragmentation attacks from $C$ to the time it takes
for the fastest fragment's chain to generate 2 blocks, with high probability. 
This improved robustness comes at the cost of a higher fast path latency, since a higher $AT$ implies that transactions take longer to age up to green.


Before describing how \name can reduce the above vulnerability window,
let us recall the two fragmentation scenarios in Figure \ref{fig:Parmenides-low-latency-scenarios}, which assume $AT=4$.
In both scenarios, every correct process has received transaction $t$ and
an attacker is mining an alternative chain that contains a double-spend transaction $t'$.
While some correct processes will accept the alternative chain as long as it
meets the standard BBP requirements, others will be more reluctant and impose the additional constraint that, in the alternative chain, the block with $t'$  must be followed by at least $C$ blocks.
As shown in Figure~\ref{fig:Parmenides-low-latency-scenarios}, with $AT=4$,  $RRS_{p_i}(t)$ may differ between 0 and $C$ for different correct processes.
Intuitively, this upper bound on the divergence across $RRS_{p_i}(t)$ directly determines the time -- in terms of block generation rounds -- that a fragmentation attack may take to heal. 

If we can lower this upper bound, ideally down to 1, we can minimize the healing time.
We achieve this 
by extending the algorithm described in the previous section with a \emph{progressive} variant (ln. \ref{alg:slowpath:rsssprogressive}, Alg. \ref{alg:slowpath}) where each process gradually increments $RRS$
in $2D$ steps.
More precisely, we define $RRS_p(t)=\floor{age_p(t)/2}$.
To enable this progressive variant, while ensuring that $p$ only promises transaction $t$ once every correct process $p_i$ is biasing BBP towards $t$ 
(i.e., $RRS_{p_i}(t)=C$), we need to redefine $AT=2(C+1)$. 

As an example, suppose that process $p$ receives a transaction $t$.
Initially, $RRS_p(t)=0$. 
If $p$ does not observe any double-spend of $t$, $RRS_p(t)$ will grow to $1, 2, ...$ every $2D$, until it reaches $RRS_p(t)=C$ once $age_p(t)=C\times~2D$.
Finally, $2D$ later, $t$ ages up to green at $p$ and, hence, $p$ promises $t$ (at $age_p(t)=2(C+1)$).



\begin{figure*}[t]
\center
\includegraphics[width=\textwidth]{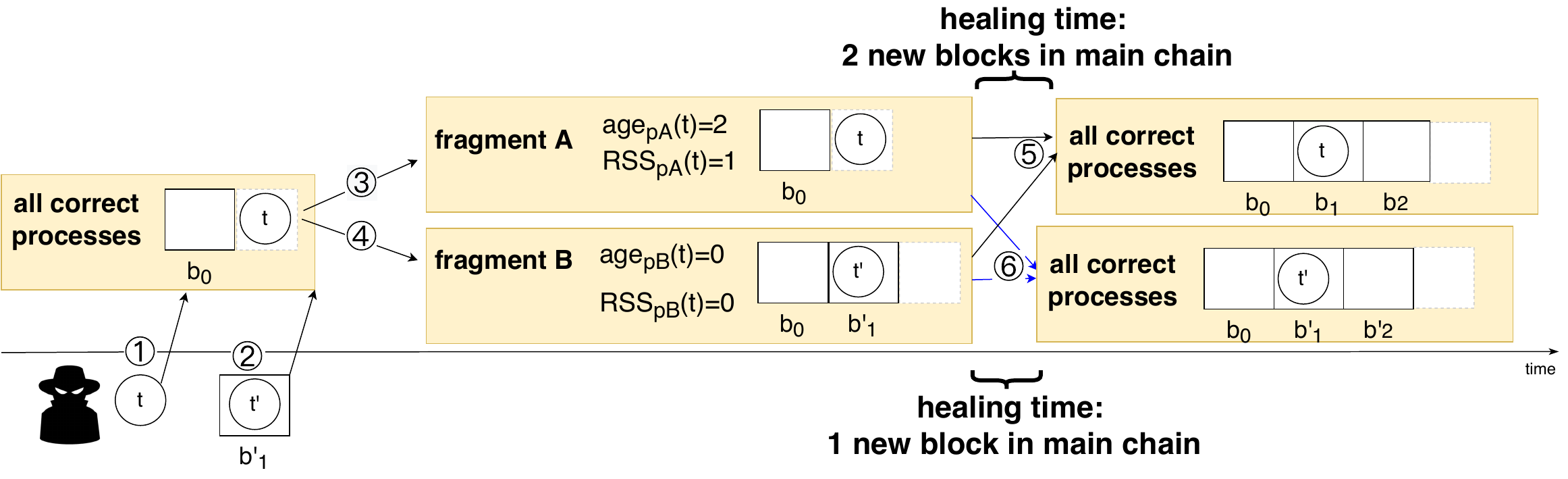}
\caption{Example of \name healing from a fragmentation attack, with its most robust configuration ($AT=2(C+1)$).}
\label{fig:Parmenides-robust-scenarios}
\squeezeup
\end{figure*}

To understand the impact of this new configuration, Figure~\ref{fig:Parmenides-robust-scenarios}
revisits the fragmentation scenarios from Figure~\ref{fig:Parmenides-low-latency-scenarios}.
As before, a successful
fragmentation attack has split the set of correct processes into two divergent fragments: fragment $A$, 
whose processes hold $t$ in their local ledger and now have $RRS_{p_A}(t)=1$ (instead of $RRS_{p_A}(t)=C$ as per the original formulation); and fragment $B$, whose processes have
$RRS_{p_B}(t)=0$ and, consequently, have accepted a chain with transaction $t’$, a double-spend of $t$, which the attacker produced and disseminated. 
On the one hand, a process in fragment $A$ does not accept the chain held by fragment $B$, 
since it holds a block with a double-spend of $t$ that is not suffixed by at least one block (due to
$RRS_{p_B}(t)=1$).
On the other hand, a process in fragment $B$ will not accept the chain of fragment $A$ 
since it has a lower height than fragment $B$'s main chain (due to the longest-chain rule).

Let us first consider the 
scenario where fragment $A$ is faster in extending its chain (\numcircledmod{5}) than fragment $B$.
In this case, as soon as $A$’s chain grows 2 blocks, processes in $B$  will start accepting that (according to the longest-chain rule).
Let us, instead, analyse the reverse situation, where fragment $B$'s chain grows faster (\numcircledmod{6}).  
The fragmentation may now heal as soon as fragment $B$ appends \emph{a single}
new block to its chain. 
These represent substantial improvements in healing time relatively to the $AT=4$ configuration, which could take up to $C$ blocks to heal in worst-case scenarios. 


We remark that these upper bounds are also observed in concurrent scenarios, where more than one block is produced either at the same fragment or across different fragments. 
Furthermore, we highlight that the 1 or 2 blocks needed to heal a
fragmentation can be produced by any process in either fragment, which means that the healing speed is determined by the aggregate resources of the correct nodes (i.e., not restricted to the largest correct fragment's resources).
These upper bounds on the healing time can be generalized to scenarios where: i) the fragmentation attack occurs when $t$ had a higher age; and, ii) the fragmentation attack simultaneously aims at multiple double-spending transactions. 

Summing up, \name can be configured between two extremes and offer a trade-off between fast path latency and robustness against fragmentation attacks.
Table \ref{table:tradeoff} summarizes the trade-off between both extremes of $AT$.


\begin{table}[t]
\footnotesize
\begin{tabular}{|p{1cm}|p{1.9cm}|p{2cm}|p{2.2cm}|}
\hline
\multirow{2}{*}{AT} & \multicolumn{2}{c|}{Upon ageing to green}                                                                                            & Upon fragmentation \\ \cline{2-4} 
                    & \begin{tabular}[c]{@{}c@{}}Promise\\ latency\end{tabular} & \begin{tabular}[c]{@{}c@{}}Speedup \\ (vs. BBP commit)\end{tabular} & Healing time       \\ \hline
$4$                 & $4D$                                                             & $C/(4r)$                                                              & up to $C\times~B$  \\ \hline
$2(C+1)$            & $2D(C+1)$                                                 & approx. $1/(2r)$                                                                & up to $2B$         \\ \hline
\end{tabular}
\caption{Trade-off between promise performance and fragmentation healing time in function of $AT$. The speedup is relative to BBP's commit latency of $C\times B$.
The $D/B$ ratio is denoted by $r$, which, by definition, is expected to be small \cite{garay2015bitcoin}.}
\label{table:tradeoff}
\squeezeup
\end{table}


\subsection{Ensuring causal order}
\label{sec:causal}

We now describe how \name handles causal dependencies consistently across the slow and fast paths, 
with the goal of ensuring promise and commit causal order (\S\ref{sec:promise}).
We hereafter assume that causal dependencies are specified in an additional transaction 
field using some suitable causality tracking mechanism.
%

Let us start by addressing promise causal order.
This is easy to ensure by adding a simple condition that checks causal dependencies before a transaction is
 promised upon ageing (ln. \ref{alg:ageing:promise}, Alg. \ref{alg:ageing}).
Concretely, when a transaction $t$ that has aged up to green at a given process, that process checks every 
causal dependency of $t$ before promising $t$.
If every such causal dependency is already promised or committed at that process, then it promises $t$.
Otherwise, $t$ is temporarily held in a local set of green transactions that still have 
causal dependencies awaiting to be promised/committed.

We also need to enforce commit causal order, which we achieve by extending \name with 
two additional restrictions to how each process manages its local chain.

The first restriction requires that the mempool comprises two distinct queues: a \emph{ready} and a \emph{pending} queue.
When a given process $p$ receives a transaction $t$ and is about to be included in the 
local mempool, an additional validation procedure will check every causal dependency of $t$.
If at least one causal dependency of $t$ is still absent from the \emph{ready} queue at $p$, 
$t$ is inserted in the \emph{pending} queue.
Otherwise, $t$ is added to the \emph{ready} queue, and any transactions in the \emph{pending} queue 
that causally depend on $t$ have their causal dependencies re-evaluated and, accordingly, 
moved to the \emph{ready} queue if $t$ was their last missing dependency.
When generating blocks, $p$ will only select transactions from the \emph{ready} queue. Therefore, any transaction in a block generated by a correct process 
is preceded (in the corresponding chain order) by its causal dependencies. 

As a second restriction, causal dependencies are checked for transactions in chains received from processes. 
If the causal dependencies are not satisfied in a received chain, that chain is discarded.

It is easy to show that the two above restrictions ensure that any transaction $t$ in 
a process' local chain is ordered after every causal dependency of $t$. 
Consequently, the committed prefix is causally ordered, which implies that transactions commit in causal order.

We remark that, while the algorithm described in the previous sections ensured that every transaction issued by a correct process would be promised at every correct process, this no longer holds when transactions have causal dependencies. 
As an example, suppose that a correct process $p$ has promised a transaction $t_1$ that was issued by a Byzantine process as part of a double-spend attempt; and, after observing $t_1$, $p$ issues $t_2$, which causally depends on $t_1$. 
Recall that, since $t_1$ is part of a double-spending attempt, $t_1$ may not be promised at some (or all) process.
In that case, some (or all) processes will only promise $t_1$. 
Hence, even though $t_2$ was issued from a process that had promised $t_1$ via its fast path, the processes that had not done so will delay $t_2$'s commit
until the outcome of $t_1$ is determined by the slow path. 

Finally, we note that, even though Nakamoto consensus defines a total order of transactions, it does not prescribe any specific transaction order to miners. In particular, miners are free to select which transactions to include in a block and the order in which they appear in the block. Our approach skews the order in which transactions are included in a block such that this order satisfies the causal order defined by the promises. But because Nakamoto consensus is itself oblivious to the concrete order of transactions within a block, our approach does not affect the correctness of the original protocol.



\section{Low-latency cryptocurrencies with \name}
\label{sec:cryptopromises}

As discussed in \S\ref{sec:consensuslesstx}, a notable application that can benefit from the weaker guarantees of the promise event 
provided by \name is a cryptocurrency. We now detail how.

A cryptocurrency can be abstracted as an instance of the asset-transfer object type defined by Guerraoui et al. \cite{guerraoui2019consensus}.
An asset transfer object maintains a set of accounts, where each account is associated with an owner client
that is the only one allowed to issue transfers withdrawing from this account. 
To do so, the owner client of an account $a$ can invoke a \emph{transfer(a,b,x)} to transfer $x$ from account $a$ to account $b$.
There is a second operation, \emph{read(a)}, which every process can invoke to read the balance of account $a$.

Traditional permissionless blockchains, such as BBP, implement the asset-transfer object type by relying on the consensus-based commit event, as follows. 

\begin{itemize} 

\item \textbf{\emph{transfer(a,b,x)}}.
When the process that owns account $a$ wishes to execute \emph{transfer(a,b,x)}, 
it reads the current balance of $a$ (see next) and, if the balance is sufficient, issues a
new transaction whose payload transfers $x$ from account $a$ to account $b$. 

\item \textbf{\emph{read(a)}}. The \emph{read(a)} operation is implemented by returning $a$'s balance from the state that results from the ordered execution of every committed transaction in the local chain.

\end{itemize}

Guerraoui et al. \cite{guerraoui2019consensus} prove that, in fact, the asset-transfer object type can be correctly implemented in
a consensusless fashion.
They also present (and prove correct) an actual consensusless implementation of the asset-transfer object type based on message passing.
The algorithm they propose relies on a secure broadcast layer that exposes a \emph{broadcast} and a \emph{deliver} event (for messages), 
while offering uniform reliable delivery with \emph{source order} guarantees despite Byzantine faults.
The complete algorithm by Guerraoui et al. defines which state each process maintains in order to know, which outgoing transactions have been issued by that process, as well as which incoming transactions have been delivered and validated at that process. Further, it defines how, based on that state, the causal dependencies field of a newly-issued transaction can be efficiently encoded.

We can port their approach to \name by replacing the underlying secure broadcast layer with \name.
Concretely, by simply replacing the \emph{broadcast} and \emph{deliver} events in Guerraoui et al.’s algorithm 
with the \emph{issue} and \emph{promise} events of \name’s interface.
The key insight to this transformation is that the properties that Guerraoui et al. require from the secure broadcast layer (namely, integrity, agreement, validity and source order \cite{guerraoui2019consensus}) are also satisfied by \name’s promises with high probability. 
We prove this later on this section.

Below, we present a high-level summary of the algorithm that results from porting Guerraoui et al.’s to rely on  \name’s interface. 
For lower-level details, we refer the reader to \cite{guerraoui2019consensus}. 

\begin{itemize} 

\item \textbf{\emph{transfer(a,b,x)}}. When a processes $p$ that owns account $a$ executes the \emph{transfer(a,b,x)} operation, it confirms that account $a$ has enough funds and, if so, issues a transaction $t$, whose payload describes the requested operation.
Further, the causal dependencies field of the new transaction $t$ is the set of transactions comprising:
i) every previously issued outgoing transfer transaction (i.e., transferring funds from $a$); 
and ii) every incoming transaction (i.e., transferring funds to $a$) already promised by process $p$. 

\item \textbf{\emph{read(a)}}. The \emph{read(a)} operation is implemented by returning $a$'s balance from the state that results from the ordered execution of every \emph{promised} transaction in the local chain.

\end{itemize}

Recall that, for most transactions, \name’s fast-path ensures that most correct processes are able to promise such transactions much sooner 
than the time that BBP slow-path takes to commit them.
Therefore, 
the above implementation of the asset-transfer object achieves important latency improvements. 

The above promise-based implementation is correct according to the specification of the asset-transfer object type \cite{guerraoui2019consensus}.
In other words, the above implementation correctly supports a cryptocurrency. 
The following lemma states this.

\begin{lemma}
The proposed promise-based implementation of \emph{transfer(a,b,x)} and \emph{read(a)} is a correct implementation
of an asset-transfer object type \cite{guerraoui2019consensus}.
\end{lemma}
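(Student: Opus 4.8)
The plan is to prove the lemma by reduction to the correctness result already established by Guerraoui et al.~\cite{guerraoui2019consensus}. Since they prove that their message-passing algorithm correctly implements the asset-transfer object type, and since the only change we make is substituting the underlying secure broadcast layer with \name's \emph{issue}/\emph{promise} interface, it suffices to show that \name's promises satisfy the very properties that their correctness proof relies upon. Concretely, their algorithm assumes a secure broadcast layer offering \textbf{integrity}, \textbf{agreement}, \textbf{validity}, and \textbf{source order} despite Byzantine faults. My goal is therefore to verify, one by one, that \name's \emph{issue}/\emph{promise} events deliver each of these four guarantees (with high probability), so that the correctness of the ported algorithm follows directly from their theorem.

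First I would map each broadcast property to a corresponding guarantee of \name that has already been established earlier in this paper. For \textbf{validity} (if a correct process issues $t$, every correct process eventually promises $t$), I would invoke Case~1 of the \emph{eventually committed upon promised} theorem: a correctly issued transaction always successfully ages and is promised at every correct process. For \textbf{agreement} (if a correct process promises $t$, every correct process eventually promises $t$), I would combine the \emph{eventually committed upon promised} property (every correct process commits $t$) with the observation that a commit entails a promise in \name's life cycle (Figure~\ref{fig:2paths}), so every correct process indeed promises $t$. For \textbf{integrity} (each transaction is promised at most once, and only if genuinely issued by its claimed sender), I would appeal to the digital-signature check and the double-spend detection in the ageing protocol (Algorithm~\ref{alg:ageing}), which ensure a process ages and promises at most one transaction per $(issuer, seqno)$ identifier. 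Finally, for \textbf{source order} (transactions from the same issuer are promised in issuance order at every correct process), I would invoke the \emph{causal order} theorem together with the stipulation from \S\ref{sec:causal} that the causal dependencies of a transaction $t$ issued by $p$ must include every transaction $p$ previously issued; since promises respect causal order, same-sender transactions are promised in their sequence-number order.

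Having established the four correspondences, I would conclude by noting that the ported \emph{transfer}/\emph{read} algorithm is syntactically identical to Guerraoui et al.'s save for this substitution, and therefore inherits their correctness proof verbatim once the broadcast-layer hypotheses are discharged. The one subtlety to address carefully is that \name's guarantees hold only \emph{with high probability} rather than deterministically, so the resulting implementation is correct with high probability rather than in the worst case; I would flag this explicitly to keep the claim honest and consistent with the rest of the paper.

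The main obstacle I expect is not any single property in isolation but rather establishing \textbf{source order} cleanly, because it is the place where causal dependencies interact most delicately with the weaker consensusless promise semantics. In particular, I must be careful that the \emph{causal order} theorem, combined with the requirement that an issuer's new transaction depends on all its own prior transactions, genuinely forces same-sender promises into sequence-number order at \emph{every} correct process, and not merely at the issuer. The argument hinges on the \emph{pending}/\emph{ready} queue mechanism of \S\ref{sec:causal}, which ensures a transaction is not promoted to \emph{ready} (and hence never promised) until all its dependencies are promised locally; I would need to verify that this mechanism, applied at a receiving process, correctly enforces the issuer's self-dependency chain. Checking that this holds despite a Byzantine issuer who may inject double-spends (which, per \S\ref{sec:causal}, can delay but not reorder promises) is the technically most delicate step.
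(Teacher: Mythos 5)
Your overall strategy is exactly the paper's: reduce the lemma to Guerraoui et al.'s correctness theorem~\cite{guerraoui2019consensus} by discharging the four broadcast-layer hypotheses (integrity, agreement, validity, source order) with \name's issue/promise events, with the same with-high-probability caveat. Your integrity and agreement arguments match the paper's (the paper's agreement proof is precisely \emph{eventually committed upon promised} plus persistence plus the fact that a commit entails a promise), and your validity argument via successful ageing of correctly issued transactions is a harmless variant of the paper's route through BBP liveness --- indeed you prove the slightly stronger statement that \emph{every} correct process promises $t$, which is fine.

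The genuine gap is in source order, and it is exactly where you flagged difficulty but then reached for the wrong tool. Your argument --- causal order plus the self-dependency requirement plus the \emph{pending}/\emph{ready} queue mechanics of \S\ref{sec:causal} --- presupposes a well-defined sequence-number order among the issuer's transactions. That covers transactions with \emph{distinct} sequence numbers, but a Byzantine issuer $r$ can emit two transactions $t$ and $t'$ with the \emph{same} sequence number (a double-spend), between which no sequence order exists; moreover, a Byzantine issuer is under no obligation to declare its own prior transactions as dependencies, so the self-dependency stipulation gives you nothing against such an issuer. No amount of queue bookkeeping resolves this case, because the danger is not reordering but the possibility that $p$ and $q$ each promise both conflicting transactions. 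The paper closes it with a short impossibility argument you are missing: if both $t$ and $t'$ were promised by correct processes, then by \emph{eventually committed upon promised} both would eventually commit at every correct process; but distinct transactions with the same sequence number conflict, and at most one of a conflicting set can ever commit --- contradiction. Hence the same-sequence-number case is vacuous for source order, and your remark that double-spends ``can delay but not reorder promises'' should be replaced by this stronger claim: at most one transaction of a conflicting set is ever promised anywhere. With that piece added, your proof matches the paper's.
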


\begin{proof}
To show that the above implementation is equivalent to Guerraoui et al.'s message passing asset-transfer object implementation, which was originally proposed and proved correct in \cite{guerraoui2019consensus}, we prove that \name's issue and promise events satisfy, with high probability, 
the properties of the secure broadcast layer underlying Guerraoui et al.’s implementation.
Next, we take the properties of 
the broadcast layer that underlies Guerraoui et al.'s algorithm (namely, integrity, agreement, validity and source order \cite{guerraoui2019consensus})
and reformulate them by renaming the \emph{broadcast} and \emph{deliver} events
by the \emph{issue} and \emph{promise} events. 
Then, we prove that the resulting properties are satisfied by \name.


\noindent\emph{Integrity: a correct process promises a transaction $t$ from a process $p$ at most once and, if $p$ is correct, only if $p$ previously issued $t$. }
This is ensured since transactions in BBP are digitally signed and carry a unique identifier. 

\noindent\emph{Agreement: if processes $p$ and $q$ are correct and $p$ promises $t$, then $q$ promises $t$.}
Let us recall that, if a correct process $p$ promises some transaction $t$, then $p$ eventually commits $t$ (by the \emph{eventually committed upon promised} property). Hence, any other correct process $q$ also eventually commits $t$ (by BBP's persistence property), 
thus, by definition of promise, $q$ also promises $t$.

\noindent\emph{Validity: if a correct process $p$ issues $t$, then $p$ promises $t$.}
This is ensured since any transaction $t$ issued by a correct process $p$ is eventually committed by $p$ (by BBP's liveness property) therefore, by definition of promise, $p$ also promises $t$.

\noindent\emph{Source order: 
if $p$ and $q$ are correct processes and both promise transactions $t$ and $t'$, both issued by the same
process $r$, then they do so in the same order.}
Let us first consider the case where $t$ and $t'$ have distinct sequence numbers. Therefore, the promise causal order property guarantees that both $p$ and $q$ will promise both transactions
in order defined by their sequential numbers.
Let us, instead, assume by contradiction that $t$ and $t'$ have the same sequence number and are promised by $p$ and $q$. 
Therefore, by the \emph{eventually committed upon promised} property, $p$ and $q$ will eventually commit both transactions.
Still, by definition, distinct transactions with the same sequential number are conflicting and, therefore, at most one can commit. 
This contradicts the initial assumption.
\end{proof}

We conclude with two final remarks. First, although Guerraoui et al.’s algorithm was originally proposed in the context of permissioned systems, 
adapting it to exploit the primitives of \name enables it the work in permissionless environments.
Furthermore, while the original proposal supported a stand-alone cryptocurrency system, the above adaptation to \name 
integrates the low-latency cryptocurrenty in a richer ecosystem where other applications with stronger consistency requirements
may also co-exist. 
For instance, this hybrid consistency ecosystem enables processes to issue smart contract transactions (via the issue/commit interface), whose execution costs are charged from cryptocurrency accounts which may receive incoming transfers as defined above (via the issue/promise interface).

\section{Evaluation}
\label{sec:evaluation}
\label{sec:admitandcommittime}

In this section, we evaluate \name with the goal of answering the following
questions: i) what are the latency improvements that \name brings to applications with different consistency needs, and ii) what is the impact of fragmentation attacks?

As we further detail below, our experimental evaluation is performed in an environment as close as possible to a real deployment. Namely, it relies on the real code of the reference Geth implementation of 
Ethereum \cite{Geth2017}, which we extend to implement \name, running in a system of 500 processes.
Adopting a similar methodology as related works (e.g., \cite{Eyal2016,li2018conflux,Yu2020ohie}), the workload is derived from a real trace of Ethereum,  PoW mining is simulated, and a realistic geo-distri\-buted network is emulated using a state-of-the-art emulator~\cite{Kollaps}.

Next, we detail the evaluation scenario, metrics and discuss our results.


\begin{table*}[]
  \center
\footnotesize
\begin{tabular}{|l|l|}
\hline
Number of processes   & 500\\ \hline
Mining power & 24.0\%, 21.3\%, 13.2\%, 12.1\%, 5.7\%, 1.9\%, 1.8\%, 1.5\%, 1.4\%, \\ 
& 1.3\%, 1.1\%, 1.0\%, 1.0\%, and 0.026\% for each remaining process\\ \hline\hline
Average block generation time ($B$) & 20 s\\ \hline
Average transaction generation time & 125 ms\\ \hline
Commit threshold ($$C$$) & 12\\ \hline\hline

Average message delivery delay & 120 ms\\ \hline
Maximum delivery delay ($$D$$) & 960 ms\\ \hline\hline

\end{tabular}
\caption{Summary of experimental parameters.}
\label{table:experiments}
\end{table*}

\begin{figure}[!t]
  \centering
  \includegraphics[width=0.48\textwidth]{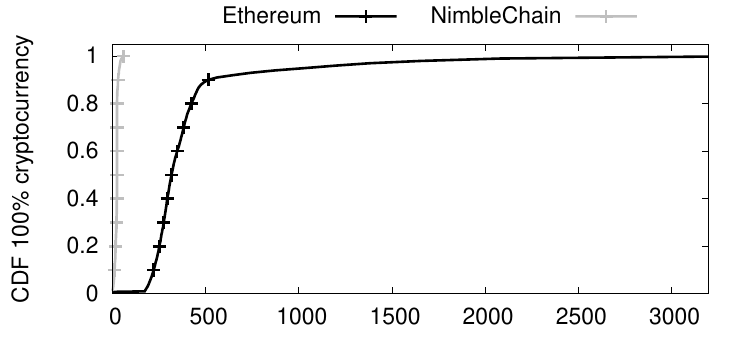}
  \includegraphics[width=0.48\textwidth]{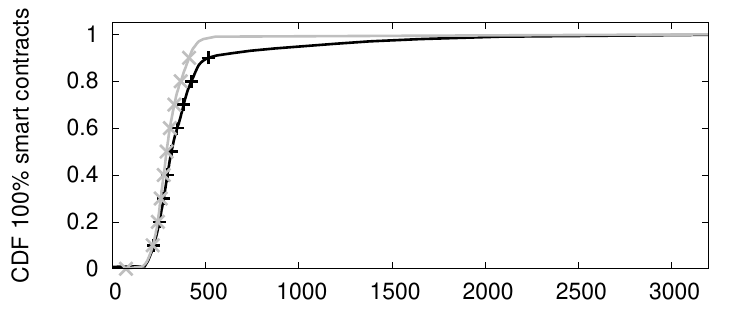}
  \includegraphics[width=0.48\textwidth]{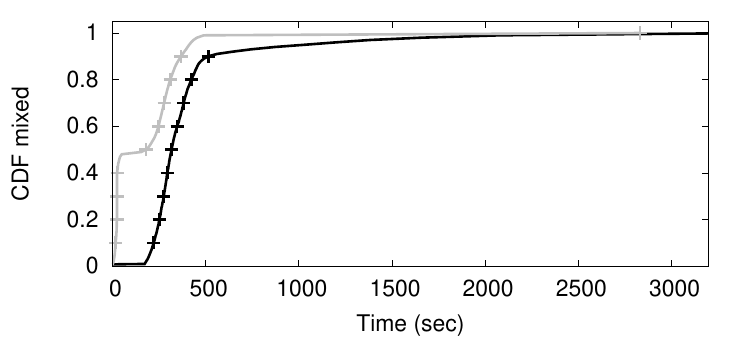}

  \caption{Transaction latency: 100\% cryptocurrency transactions (above);
           100\% smart contract transactions (middle);
          mixed cryptocurrency and smart contract transactions}
  \label{fig:cdfadmiteth}
\vspace{-.5cm}
\end{figure}



\subsection{Implementation, deployment and workload.} 
\label{sec:implementation}

We implemented \name as an extension of Ethereum using the reference Geth implementation~\cite{Geth2017}.
The implementation required 
$\approx 3000$ new lines of code to implement \name and changing $\approx 300$ lines of code in Geth.

For experimental purposes, we also developed a custom client that, using the regular API, injects transactions in the system according
to real transaction traces from Etherscan~\cite{Etherscan2019} 
(a sample of transactions from block 5306612 up to block 6222336).
The workload has no information about causality (aside from the implicit dependencies among transactions issued from the same account) and hence we 
set each transaction to depend on the most recently promised transaction -- and, transitively, from the causal dependencies of that transaction. 
Note that this is a conservative choice, since it increases the probability that some processes will have to wait for causal dependencies. 

Table \ref{table:experiments} summarizes the main experimental parameters. To run experiments with a large number of processes we replaced the PoW component 
with a probabilistic mining selection process that follows a Poisson distribution and mimics the block production distribution.
To reflect the non-uniform mining power distribution of today’s mainstream permissionless blockchains, 
we allocated the estimated mining power of the top-13 most powerful mining pools of Ethereum to a subset of 13 processes, according to \cite{Etherscan2019} (as detailed in Table \ref{table:experiments}).
The remaining mining power was uniformly distributed across the remaining 487 processes.

We adjusted the block production probability to mimic Ethereum's rate of 3 blocks per minute~\cite{Etherscan2019}.
Moreover, we extended both \name and Ethereum implementations to log transaction events such as 
generation, reception, dissemination and block events (such as 
insertion to the local chain and forks) to allow \emph{a posteriori} offline 
processing for our evaluation.
We use the same codebase, client and PoW component for the \name and Ethereum.
Every correct process is parameterised with $C=12$, the current standard commit threshold in Ethereum~\cite{Wood2014}.

We ran each experiment for one hour with 500 processes for both \name and Ethereum, using 5 machines equipped with a mix of Intel(R) Xeon(R) CPUs. 
We empirically found this configuration of machines to be able to accommodate 500 processes without becoming overloaded and hence compromising the fidelity of the results.
The processes run on an emulated network using Kollaps~\cite{Kollaps} with internet latencies that model the geo-distributed nature of permissionless blockchains.
As suggested by previous measurement studies~\cite{Sirer2018, silva2020impact}, we used an average latency value of $120ms$, and a conservative value for $D = 8 \times 120ms = 960ms$.
Each node started with the same local chain, consisting of a single genesis block.
We injected 8 transactions per second, 
as common in
Ethereum \cite{Etherscan2019}. 
All results are the average of 5 runs.

\subsection{Promise and commit latencies} 

In this section, we study \emph{transaction latency} as perceived from a process $p$, which we define as the time from the moment a given transaction $t$ is issued (at some process, not necessarily $p$) and the moment $p$ triggers the event that is required by the application semantics associated with $t$ (i.e., either promise or commit).
We evaluate transaction latency for two transaction types,  with distinct consistency needs: cryptocurrency transactions, which only require promises; and smart contract transactions, which need to be committed. 

We consider 3 scenarios: i) 100\% cryptocurrency transactions;
ii) 100\% smart contract transactions; iii) and a mixed ratio of 44\% cryptocurrency and 66\% smart contract transactions as observed in Ethereum~\cite{silva2020impact}. 
We evaluate these 3 scenarios considering 
no Byzantine behaviour.  (We study Byzantine behaviour in the next section.)

We consider \name configured with $AT=2(C+1)$, the most robust configuration.
For brevity, we do not evaluate \name configured with
$AT=4$, which is a less robust configuration that would present
even lower values for promise latency. 

Figure~\ref{fig:cdfadmiteth} 
presents our results.
As expected, the average transaction latency for cryptocurrency transactions
is around one order of magnitude lower with \name (promise latency) than with Ethereum (commit latency),
as shown in Figure~\ref{fig:cdfadmiteth} (top).

Smart contract transaction commit times with Ethereum are very similar to \name's, which suggests
that the overhead of \name on the underlying protocol is negligible, as shown in Figure~\ref{fig:cdfadmiteth} (middle).
\name seems to slightly outperform Ethereum at the head, due to slightly different
transaction ordering criteria in the mempool. As an example, causal dependencies are ordered before
a dependent transaction regardless of their price. Hence, such dependencies may reach the blockchain
earlier in \name than in Ethereum.

Figure~\ref{fig:cdfadmiteth} (bottom) depicts the mixed scenario and shows an interesting trend.
Cryptocurrency transactions continue to perform much faster in \name (promise latency) than in Ethereum (commit latency), while smart contracts commit at roughly the same pace with the results showing a clear inflection point between each transaction type.

\begin{figure}[t]
  \includegraphics[width=0.5\textwidth]{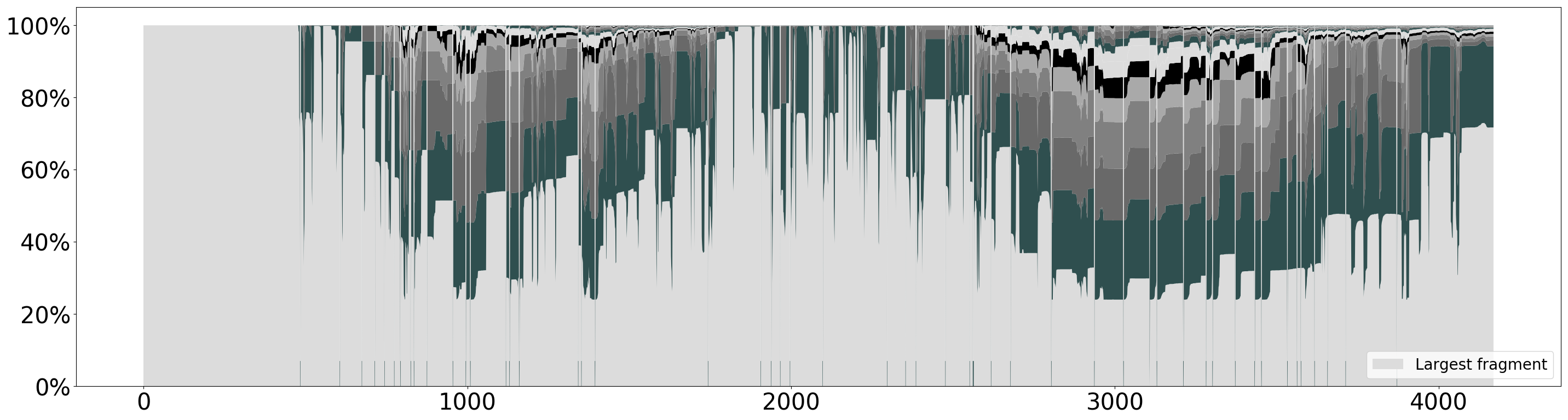}
    \includegraphics[width=0.5\textwidth]{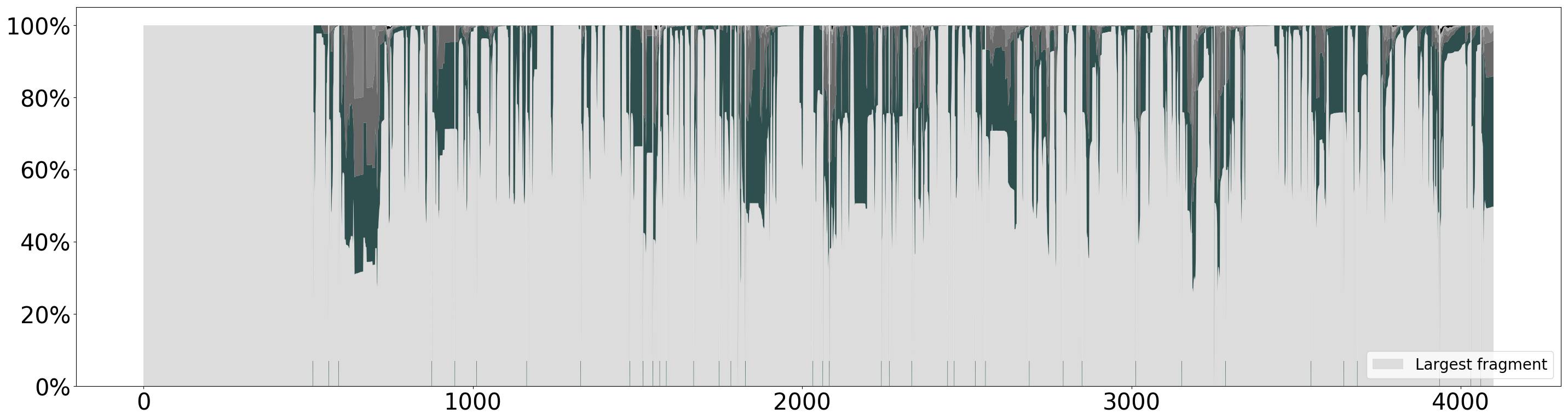}
      \includegraphics[width=0.5\textwidth]{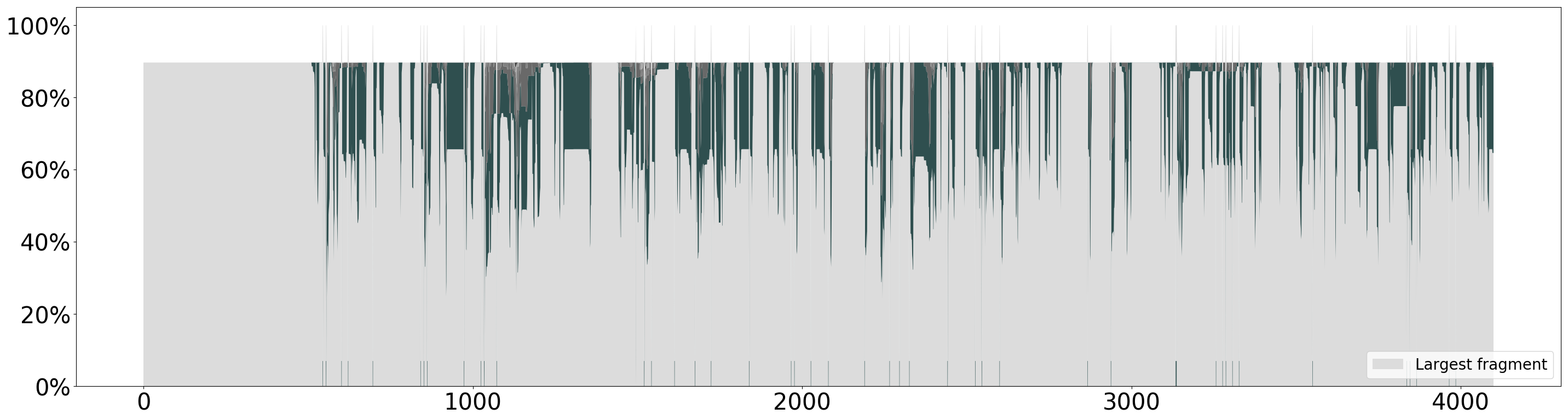}
  \caption{Fragment adoption during a fragmentation attack, comparing \name with $AT=4$ (top) and $AT=2(C+1)$ (middle) against the Ethereum baseline (bottom, no fragmentation attack) The y-axis
expresses the percentage of resources owned by each fragment of correct processes, where 100\% in the same colour (light grey) means that every correct process’  local chain is identical. The x-axis represents time, in seconds.\
}
  \label{fig:fragmentation}
\end{figure}

\begin{figure}[t]
\includegraphics[width=0.5\textwidth]{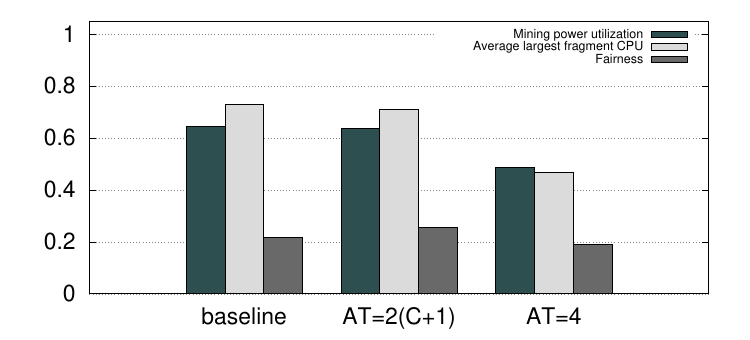}
\caption{Mining power utilization, average largest fragment CPU and fairness upon fragmentation attack.
}
\label{fig:mpu}
\end{figure}	

\subsection{Fragmentation attacks} 

To perform a fragmentation attack for some period $T$, an attacker must allocate a portion $x$ 
of its mining power to that purpose during $T$.
As a return from such investment, the attacker will reduce the average mining power of the largest correct fragment by $y$ during $T$.
To be profitable to the attacker, $y$ must be higher than $x$. 
Otherwise, a rational attacker will not have any tangible benefit, and thus will not carry on the attack.
The main goal of this section is to study the impact of fragmentation attacks, and consequently answer the above question.

Recall that a fragmentation attack involves broadcasting a transaction $t$ and immediately starting to produce a block with a double-spend $t'$.
Later, the attacker broadcasts $t'$ in an attempt that different correct processes will receive it at different ages of $t$.
This will result in some correct processes choosing $RRS_p(t)=a$ and others $RRS_q(t)=a+\delta$, where $a$ denotes the 
age that $t$ had when the former received $t’$ and $\delta$ depends on the $AT$ parameter (as discussed in \S\ref{sec:slicing}).

We empirically found that the most advantageous case for the attacker was with $a=0$, i.e., some correct processes $p$ receive $t'$ when $age_p(t)=0$, while other processes $q$ receive $t’$ when $age_q(t)>0$ (i..e, either 1 or 2). Hence, in this analysis, we consider $a=0$. 



As detailed in \S\ref{sec:slicing}, the divergent fragments of correct processes eventually converge when at most 2 new blocks are generated from either fragment (and delivered), when $AT=2(C+1)$.
In contrast, when $AT=4$, the divergent fragments of correct processes may require $C$ blocks to converge.
In the meantime, the attacker may
perform a burst of consecutive fragmentation attacks with the
intention to keep the system fragmented for a longer period,
by creating a new fragmentation before the current one is healed.
In these experiments, we assumed the attacker holds $24\%$ of the resources, which corresponds to the largest mining pool in Ethereum~\cite{silva2020impact}.
The attacker continuously 
performs a fragmentation attack to weaken the aggregate resources owned by correct processes,
by making them adhere to distinct fragments. 
The attacker uses its total mining power to perform the fragmentation attack series ($x=24\%$).
For brevity, we omit scenarios where the attacker only spends a fraction of its total mining power ($x<24\%$). 
Our analysis of lower values of $x$ yielded similar observations as the ones that we present next.

Further,
we artificially instrumented the delivery protocol to ensure a \textit{fragmentation ratio} that is favorable to the attacker: 80\% of nodes randomly
considered $RRS_p(t_1)=0$ and could accept a malicious block immediately,
while the remaining 20\% considered $RRS_p(t_1)=1$ for $AT=2(C+1)$ or $RRS_p(t_1)=C$ for $AT=4$. 
In a real scenario,
it would be very unlikely for the attacker to be able to reach such
precise fragmentation, since he does not control the network latency.
We tested other values for the fragmentation ratio (e.g. 50\%-50\%, 70\%-30\% and 60\%-40\%), which 
yielded similar conclusions and hence we omit those results.

Figure~\ref{fig:fragmentation} shows the results as stacked histogram. 
The y-axis expresses the percentage of resources owned by each fragment and the x-axis represents time, in seconds.
The light colored layer represents the percentage of resources owned by honest processes in the largest (i.e., with the highest mining power) fragment at a given moment. 
It is worth noting that, even in the baseline, the mere generation of a new block by a correct process (with or without a concurrent fork) causes short periods where the correct processes are not perfectly synchronized (i.e., in the same local chain state), which quickly end as the system synchronizes again.
Note that this is intrinsic to BBP.

As expected, the results for $AT=4$ (Figure~\ref{fig:fragmentation} top) show a long-lasting fragmentation since processes require a long suffix of $C$ blocks before accepting a chain with $t_1'$. 
For $AT=2(C+1)$ (Figure~\ref{fig:fragmentation} middle) we observe a much faster healing since processes require only $1$ block before accepting a chain with $t_1'$.
This confirms that $AT=4$ trades a decreased latency in the fast path for less robustness against
fragmentation attacks, while $AT=2(C+1)$ provides more robustness at the expense of a higher latency in the fast path. 

To conclude our analysis of the fragmentation attack, we evaluate its
impact on two metrics originally proposed by Eyal et al.~\cite{Eyal2016}.
The mining power utilization (MPU) is the ratio between the aggregate work of the main chain and all produced blocks.
Fairness is the ratio between the number of blocks generated by the largest honest miner and all produced blocks. In a fair system, the fairness ratio should be identical to the mining power of the reference miner, which holds 21.3\% mining power in this experiment (corresponding to the 2nd most resourceful miner). 
We also measure the average CPU power of the largest fragment.

The results are presented in Figure~\ref{fig:mpu}.
As it is possible to observe, the MPU decreases from its baseline value when an attack is carried out.
Further, MPU decreases as $AT$ decreases.
The average CPU owned by the largest fragment during the attacks also decreases as $AT$ decreases. 
These are expected results, since the mining power is scattered among fragments.  

Most importantly, the decrease in both metrics is considerably lower than the mining power the attacker invested to put the attack in practice. 
This is true even for the least robust variant of \name, $AT=4$. 
This means that the attacker does not reach a \emph{break even} point. Therefore, the analyzed attack is not profitable for a rational attacker.
Instead of using his mining power to slow down the correct system's ability to advance the main chain (through fragmentation), it would be
more profitable for the attacker to employ the same mining power to accelerate the generation of his malicious chain (an attack vector that is possible in standard BBP). 

Finally, and as expected, the fragmentation periods cause \textit{fairness} deviations in \name. 
While vanilla Ethererum is the closest to the desired fairness target (ideally, 21.3\%), the fairness of \name’s variants are still within a 10\% distance from the ideal target. 


\section{Discussion}
\label{sec:discussion}



\textbf{Assumptions on network propagation.} The design of \name depends on the assumption of a well-known maximum delivery delay, $D$.
Of course, \name may behave incorrectly if the $D$ assumption is not met by the underlying network. 
More precisely, a period of arbitrary propagation delays (beyond $D$) can violate our assumption that the age and $RSS$ that 
two correct processes assign to some transaction do not diverge by more than 2 and 1 (respectively).
An arbitrary divergence across the ages each process sees may lead to pathological situations where two double-spending transactions are able to 
successfully age at distinct processes. Further, an arbitrary divergence in $RSS$ may compromise the ability of \name to heal upon a fragmentation attack. 

We remark that a maximum network propagation time is a standard assumption in permissionless blockchains. For instance, every permissionless blockchain mentioned in \S\ref{sec:rw} relies on this assumption.
Further, different works have studied vulnerabilities that are possible if the $D$ assumption is violated. 
For instance, an unexpectedly high block propagation delay may slow down the time the system converges upon forks, which provides
an advantage to a resourceful attacker to temporarily benefit from lower mining power utilization from the correct processes~\cite{garay2015bitcoin}.
An eclipse attack \cite{Marcus2018} can isolate a subset of correct processes from the remaining correct system -- which can be translated to arbitrarily increasing $D$ from the outside to the eclipsed partition --, allowing a resourceful attacker to control the evolution of the blockchain within the partition.

Still, we acknowledge that \name might be more sensitive than BBP to smaller sporadic violations of $D$, and/or smaller periods where $D$ is violated, during which \name's properties are violated but BBP's hold.
This is a natural consequence of a protocol 
that reaches a decision over a shorter time window.
This consequence is shared with every proposal in \S\ref{sec:rw} that provides lower commit latencies than BBP.

\textbf{Performance with higher delivery delays.} In absolute terms, \name depends on $D$ being low enough to ensure a low fast path latency.
Hence, one may rush to the conclusion that the speedup that \name introduces relatively to the baseline BBP-based permissionless blockchain
depends on $D$.
However, that is not correct. Let us suppose that \name was used in a low-quality network whose $D$ was much higher than the one that previous studies
find in real permissionless blockchains~\cite{Kogias2016,Kogias2019}.
In that case, the PoW difficulty (and, hence, the block generation time, $B$) would need to be adjusted accordingly, in order to keep the rate $D/B$ low enough to ensure the (probabilistic) correctness of BBP.
This readjustment of $D$ and $B$ would not only increase the fast path latency of \name, but also the commit latency of BBP. Therefore, the speedup of \name would remain the same. 
This conclusion is in line with Table \ref{table:tradeoff}, which shows that the speedup of \name depends on $f$ and $C$, not on $D$.

\section{Related work}
\label{sec:rw}

In recent years, many proposals have arisen that improve the performance of permissionless blockchains based on Nakamoto consensus.
In the vast majority of such proposals, meaningful consistency guarantees are only provided once a block is followed by enough (i.e., $C$) blocks in the ledger, 
similarly to Nakamoto's original proposal. Therefore, such proposals do not escape the well-studied  lower bound of commit latency in Nakamoto consensus~\cite{garay2015bitcoin}.
Instead, they provide improvements to throughput and/or energy efficiency, but not on commit latency.
In contrast, \name~offers a hybrid consistency model in which weaker consistency guarantees are provided by a new promise event, enforced by a different protocol that coexists with Nakamoto consensus. Consequently, promise latency is not dictated by the lower bounds of Nakamoto consensus' commit latency.

In this section, we start by surveying related work that improves permissionless blockchains based on Nakamoto consensus. 
Later, we describe work on more disruptive approaches that, while falling outside the domain of Nakamoto-based permissionless blockchains, 
are related to our work.


\textbf{Improvements over longest chain rule.}
GHOST \cite{ghost}, partially implemented in Ethereum \cite{Wood2014}, improves Nakamoto's original longest chain rule by allowing all blocks generated by honest participants to contribute to the commit of the main chain. This enables convergence even with higher block generation rates.
A different approach, followed by inclusive blockchain protocols \cite{DBLP:conf/fc/LewenbergSZ15} and PHANTOM \cite{cryptoeprint:2018:104},
organizes blocks as
directed-acyclic graphs (DAG) of blocks instead of a totally-ordered list in order to optimize performance.
More recently, Conflux \cite{254398} combines the main principles behind GHOST and DAG-based solutions in an adaptive fashion to 
provide them with better liveness guarantees. 
Despite the significant throughput gains (e.g., Conflux is able to improve GHOST's throughput by 32x),
these approaches still rely on consensus as the only path to commit. 
Consequently, they only bring modest commit latency savings (e.g., 25\% latency gains in Conflux with respect to GHOST).
\name can be plugged to any of these systems and enhance them with substantially lower commit latencies, while retaining their throughputs.

\textbf{Hierarchical and Parallel Chains.}
Alternative blockchain organizations include hierarchical and parallel chains. 
In Bitcoin-NG \cite{Eyal2016}, key blocks are generated at a similar rate as Bitcoin.
Still, in-between two key blocks, the miner of the previous key block can generate many microblocks that contain transactions. 
FruitChain \cite{10.1145/3087801.3087809} adopts a similar hierarchical approach. 
OHIE uses parallel instances of BBP and then deterministically sorts blocks to reach a total order~\cite{Yu2020ohie}.
In all these proposals, the total order of the main chain is determined by only a 
fraction of blocks (key blocks).
Hence, the remaining blocks (microblocks), which carry the actual transactions, can be safely generated at much higher
rates than BBP allows, thus increasing throughput.
Unlike \name, these proposals focus on improving throughput, not commit latency.
For example, Bitcoin-NG does improve the time it takes the system to agree on Bitcoin-NG's microblocks but does not improve commit latency~\cite{Eyal2016}, while OHIE's average commit latency is around 10 minutes \cite{Yu2020ohie}.
One exception is Prism \cite{10.1145/3319535.3363213}, which supports low-latency and high-throughput honest transactions by resorting to parallel voting chains, which determine the total order of blocks in the main chain. Still, their simulation-based evaluation results are around 2x higher than the causal commit latency of \name's most robust configuration  (40 to 58 sec with $\beta=0.3$ \cite{10.1145/3319535.3363213}).
Since all the above proposals are optimizations over BBP's foundations, \name can supplement any of them with the low-latency of our promise fast path.

A notable alternative is to organize transactions as a directed acyclic graph (DAG), which has been proposed in IOTA's Tangle system \cite{SILVANO2020307} to achieve 
improved performance, including lower latency. However, the safety of this solution depends on an honest central point of control, which is at odds 
with the decentralized and permissionless nature of blockchains.


\textbf{Sharding.}
Systems such as Elastico \cite{10.1145/2976749.2978389}, OmniLedger \cite{kokoris2018omniledger}, Rapid-Chain \cite{10.1145/3243734.3243853}, Monoxide \cite{wang2019monoxide}, Ethereum 2.0~\cite{EthSharding2018} or Tao et al.'s proposal \cite{9101451} rely on multiple parallel blockchains cooperating via sharding, where 
a small committee maintains each shard.
This approach achieves substantial throughput gains (up to thousands of transactions per second), but at the cost of security, since
the smaller shards are vulnerable to powerful attackers.
Since sharded proposals typically assume (multiple) BBP-based blockchain instances, \name can be generalized to supplement sharded proposals with a low-latency promise fast path. 
Some sharded proposals have also been shown to achieve comparable commit latency savings as \name (e.g., \cite{kokoris2018omniledger,10.1145/3243734.3243853}). Still, such results are possible in networks with much lower RTT than the one considered in our paper and only in specific best-case workloads.

\textbf{Proof-of-X alternatives.}
Another research avenue has proposed permissionless consensus algorithms that replace PoW with energy-efficient alternatives,
such as Proof-of-Stake~\cite{Kiayias2017,zamfir2017casper,Gilad2017,Asayag2018}, Proof-of-Space~\cite{dziembowski2015proofs,BeyondHellman:2017} or Proof-of-Elapsed-Time~\cite{chen2017security}. 
Among such proposals, some are still based on a variant of BBP, despite replacing the PoW leader election component by a PoX alternative (e.g., \cite{Kiayias2017,zamfir2017casper,dziembowski2015proofs,BeyondHellman:2017}). Therefore, \name's fast path can be integrated onto such proposals.

\textbf{Blockchains based on Byzantine Fault Tolerance (BFT).} 
A relevant body of work leverages BFT protocols, executed among small committees of processes, to improve the performance of permissionless blockchains.
Proposals such as ByzCoin \cite{10.5555/3241094.3241117}, Thunderella \cite{DBLP:conf/wdag/PassS17} and Solida \cite{Abraham2017} combine BBP with BFT protocols.
Other proposals such as Algorand \cite{Gilad2017}, HoneyBadger \cite{Miller2016}, and
Stellar \cite{10.1145/3341301.3359636} are more disruptive. 
These approaches can achieve comparable commit latencies as \name's most robust configuration (e.g., in Algorand \cite{Gilad2017}'s best-case setting, 22 sec).
However, they significantly change the trust assumptions of permissionless blockchains, as BFT consensus
requires that 2/3 of the validators must be trusted.
Further, most of these proposals are highly disruptive and, as such, cannot incrementally extend existing mainstream permissionless blockchains. 


\textbf{Layer-2 proposals.}
Layer-2 proposals employ an additional protocol layer that handles (and commits) transactions and use the permissionless blockchain as a backend anchor to ensure consistency in the presence of malicious behaviour. In that sense, \name fits into this broad category.
Among the most relevant proposals, so-called off-chain solutions such as the Lightning Network~\cite{lightning} and FastPay~\cite{hao2018fastpay} 
rely on a separate network of payment channels and allow two or more parties to exchange currency without committing in the blockchain.
However, these proposals have important shortcomings. 
They work at the expense of temporarily locking payment guarantees (often called \emph{collaterals}) in the blockchain if a party misbehaves.
While proposals based on payment networks are not tailored to unidirectional payment flows (as typical in retail payments from customers to merchants \cite{mavroudis2020snappy}), alternatives based on payment hubs \cite{perun,tumblebit} either impose trusted entities or increased locked funds.

More recently, Snappy \cite{mavroudis2020snappy} proposes a novel on-chain smart-contract-based alternative that mitigates the above-mentioned shortcomings and achieves payment commit latency in the order of a few seconds. Still, Snappy has important scalability limitations in the number of 
payment recipient processes (up to 200 statekeeping merchants \cite{mavroudis2020snappy}). Further, since Snappy payments require smart contract invocations, they cost 8x more than simple transactions (in Snappy's Ethereum-based implementation \cite{mavroudis2020snappy}).
In contrast, \name neither requires collaterals, nor restricts scalability, nor increases transaction cost.


\textbf{Weakly-consistent blockchains.}
Like \name, some proposals attempt to obtain partial orders instead
of total orders for cryptocurrency transactions. 
In the permissionless world, notable proposals include 
SPECTRE \cite{sompolinsky2016spectre}, 
TrustChain \cite{otte2017trustchain}, 
ABC \cite{sliwinski2019abc}, Avalanche \cite{rocket2020scalable} and Pastro \cite{DBLP:conf/wdag/KuznetsovPPT21}. 
In the context of permissioned blockchains, 
Astro \cite{collins2020online} exploits Byzantine reliable broadcast \cite{malkhi1997high} to build a payment system.
All these proposals exploit the fact that cryptocurrecy transfer transactions do not need to be totally-ordered, hence
can be managed by weaker primitives than consensus.
%
%
Like \name's promise fast path, the above proposals can serve the weak consistency needs of some applications such as cryptocurrencies. 
Still, these proposals cannot directly support general-purpose blockchains, whose application ecosystem 
comprises applications with weaker consistency demands (such as cryptocurrencies) and strong sequential consistency (such as
most smart contracts). In contrast, \name's hybrid consistency model is tailored to such mixed ecosystems. 

\textbf{Permissioned blockchains.} Permissioned blockchains (e.g., \cite{10.1145/3190508.3190538, 10.1145/3299869.3319889, 10.1145/3477132.3483584}) have emerged as a considerably more efficient alternative to the permissionless counterpart. In contrast to the latter, permissioned protocols do not support public systems 
in which anyone can participate without a specific identify. Hence, permissioned blockchain protocols target a different trust model and thus are out of the scope of our paper.

\textbf{Hybrid-consistency replication.} The dichotomy between weak and strong consistency is well studied 
in the context of traditional geo-replicated systems~\cite{Brewer2000}.
It is well established that one needs to forfeit strong consistency
to obtain the high availability, low latency, partition tolerance and high scalability that geo-replicated systems demand~\cite{Brewer2000}.
It is also known that many geo-distributed applications do 
not require strong consistency for \emph{every} operation~\cite{Lloyd:2011:DSE:2043556.2043593}
and that many such applications are dominated by operations that are correct  even if executed over a weakly-consistent view.
This observation has motivated the advent of geo-replicated systems 
supporting hybrid (or mixed) consistency models~\cite{milano2018mixt, Li:2012:MGS:2387880.2387906}. 
To the best of our knowledge, \name is the first to introduce hybrid consistency models to permissionless environments. 

\section{Conclusions and future work}
\label{sec:conclusion}



This paper proposes \name, which extends standard permissionless blockchains with a fast path that delivers 
\emph{partially ordered promises of commitment}.
This fast path supports cryptocurrency transactions and only takes a small fraction of the original commit latency, while
providing consistency guarantees that are \emph{strong enough} to ensure correct cryptocurrencies.
Since today's general-purpose blockchains also support smart contract transactions, which typically have (strong) sequential consistency needs,
\name implements a \emph{hybrid} consistency model that also supports strongly-consistent applications.
To the best of our knowledge, \name is the first system to bring together fast partially ordered transactions with totally ordered, consensus-based transactions in a permissionless setting.


Our evaluation conducted in a realistic geo-distributed environment with 500 processes shows that the average latency to promise a transaction is an order of magnitude faster than consensus-based commit. 
Furthermore, our empirical evaluation of fragmentation attacks show that, even considering very favourable conditions for 
the attacker, an attacker cannot achieve any tangible benefits in exploring this kind of attack.
A formal analysis of fragmentation attacks is out of the scope of this paper, and left for future work.

Overall, we believe that our approach of bringing fast transactions to permissionless blockchains as an extension to existing blockhains, rather than proposing a new system from scratch, is a step in the direction of bringing these results closer to adoption by \emph{de facto} blockchain systems such as Ethereum or Bitcoin, thus benefiting both the academic and industry communities.
Our work unveils new research avenues. Although this paper focuses on cryptocurrencies as the obvious application to benefit from the promise fast path of \name, our proposal can also provide important benefits to smart contracts which have (a subset of) transactions with weaker consistency needs. As an example, smart contracts that employ the ERC20 Token Standard~\cite{bitcoinwikiERC20} to transfer some asset may have weaker consistency needs. However, providing a hybrid consistency model to smart contract programs 
requires a careful integration of this model into smart contract execution runtimes, 
as well as providing programmers with the adequate abstractions to help them build smart contract methods that can safely run with weaker guarantees.

\section*{Acknowledgements}

This work was supported by national funds through FCT, Fundação para a Ciência e a Tecnologia, with references UIDB/50021/2020 and SFRH/BD/130017/2017, and the European Union's Horizon 2020 -  The EU Framework Programme for Research and Innovation under grant 952226. We thank Miguel Correia, as well as the anonymous reviewers, for their insightful comments on earlier versions of this paper.

\bibliographystyle{ACM-Reference-Format}
\bibliography{bibliography}

\end{document}